\def\UrlSpecials{\do\~{\kern -.15em\lower .7ex\hbox{~}\kern .04em}} \catcode`~=13 
\newcommand{\calJ}{\mathcal{J}}
\newcommand{\calP}{\mathcal{P}}
\newcommand{\calX}{\mathcal{X}}
\newcommand{\bbN}{\mathbb{N}}
\DeclareMathAlphabet{\mathbsf}{OT1}{cmss}{bx}{n}
\DeclareMathAlphabet{\mathssf}{OT1}{cmss}{m}{sl}
\DeclareSymbolFont{bsfletters}{OT1}{cmss}{bx}{n}  
\DeclareSymbolFont{ssfletters}{OT1}{cmss}{m}{n}
\DeclareMathSymbol{\bsfGamma}{0}{bsfletters}{'000}
\DeclareMathSymbol{\ssfGamma}{0}{ssfletters}{'000}
\DeclareMathSymbol{\bsfDelta}{0}{bsfletters}{'001}
\DeclareMathSymbol{\ssfDelta}{0}{ssfletters}{'001}
\DeclareMathSymbol{\bsfTheta}{0}{bsfletters}{'002}
\DeclareMathSymbol{\ssfTheta}{0}{ssfletters}{'002}
\DeclareMathSymbol{\bsfLambda}{0}{bsfletters}{'003}
\DeclareMathSymbol{\ssfLambda}{0}{ssfletters}{'003}
\DeclareMathSymbol{\bsfXi}{0}{bsfletters}{'004}
\DeclareMathSymbol{\ssfXi}{0}{ssfletters}{'004}
\DeclareMathSymbol{\bsfPi}{0}{bsfletters}{'005}
\DeclareMathSymbol{\ssfPi}{0}{ssfletters}{'005}
\DeclareMathSymbol{\bsfSigma}{0}{bsfletters}{'006}
\DeclareMathSymbol{\ssfSigma}{0}{ssfletters}{'006}
\DeclareMathSymbol{\bsfUpsilon}{0}{bsfletters}{'007}
\DeclareMathSymbol{\ssfUpsilon}{0}{ssfletters}{'007}
\DeclareMathSymbol{\bsfPhi}{0}{bsfletters}{'010}
\DeclareMathSymbol{\ssfPhi}{0}{ssfletters}{'010}
\DeclareMathSymbol{\bsfPsi}{0}{bsfletters}{'011}
\DeclareMathSymbol{\ssfPsi}{0}{ssfletters}{'011}
\DeclareMathSymbol{\bsfOmega}{0}{bsfletters}{'012}
\DeclareMathSymbol{\ssfOmega}{0}{ssfletters}{'012}
\newcommand{\hatB}{\hat{B}}
\newcommand{\tilB}{\tilde{B}}
\newcommand{\hatG}{\hat{G}}
\DeclareMathOperator*{\argmax}{arg\,max}
\DeclareMathOperator*{\argmin}{arg\,min}
\DeclareMathOperator{\var}{\mathsf{Var}}
\newtheorem{theorem}{Theorem} 
\newtheorem{lemma}[theorem]{Lemma}
\newtheorem{proposition}[theorem]{Proposition}
\newtheorem{corollary}[theorem]{Corollary}
\newtheorem{definition}{Definition}
\newtheorem{remark}{Remark}
\newcommand{\qednew}{\nobreak \ifvmode \relax \else
      \ifdim\lastskip<1.5em \hskip-\lastskip
      \hskip1.5em plus0em minus0.5em \fi \nobreak
      \vrule height0.75em width0.5em depth0.25em\fi}
\def \E{\operatorname{E}}
\def \var{\operatorname{Var}}
\begin{document}
\title{Exact Moderate Deviation Asymptotics in Streaming Data Transmission}
\author{\IEEEauthorblockN{Si-Hyeon Lee, Vincent Y. F. Tan, and Ashish Khisti}\thanks{S.-H. Lee and A. Khisti are with the Department of Electrical and Computer Engineering,
University of Toronto, Toronto, Canada (e-mail: sihyeon.lee@utoronto.ca; akhisti@comm.utoronto.ca).  V.~Y.~F.~Tan is with the Department of Electrical and Computer Engineering
and the Department of Mathematics, National University of Singapore,
Singapore (e-mail: vtan@nus.edu.sg). The work of V. Y. F. Tan is supported in part by a Ministry of Education Tier 2 grant (grant number R-263-000-B61-113).}}

\maketitle

\begin{abstract}
In this paper, a streaming transmission setup is considered where an encoder observes a new message in the beginning of each block and a decoder sequentially decodes each message after a delay of $T$ blocks. In this streaming setup, the fundamental interplay between the coding rate, the error probability, and the blocklength in the moderate deviations regime is studied. For output symmetric channels, the moderate deviations constant is shown to improve over the block coding or non-streaming setup by exactly a factor of $T$ for a certain range of moderate deviations scalings. 
For the converse proof, a more powerful decoder to which some extra information is fedforward is assumed. The error probability is bounded first for an auxiliary channel and this result is translated back to the original channel by using a newly developed change-of-measure lemma, where the speed of decay of the remainder term in the exponent is carefully characterized.
For the achievability proof, a known coding technique that involves a joint encoding and decoding of fresh and past messages is applied with some manipulations in the error analysis. 
\end{abstract}
\begin{keywords}
Streaming transmission, moderate deviations, discrete memoryless channel, converse, change-of-measure
\end{keywords}

\IEEEpeerreviewmaketitle

\section{Introduction}
\begin{table}[t] \caption{Asymptotic behaviors in three regimes} \label{table:regimes}
\begin{center} 
   \begin{tabular}{|l || c | c|c|} 
 \hline  
Regime & Large deviations & Central limit &  Moderate deviations \\
\hline
Operating rate & $R<C$ & $R=C-\Theta(n^{-1/2})$  &  $R=C-\Theta(n^{-t})$ for $0<t<\frac{1}{2}$   \\
\hline
Error probability & Exponentially decaying  & Non-vanishing & Subexponentially decaying \\
\hline
\end{tabular} 
\end{center} 
\end{table}
In his pioneering work \cite{Shannon:48}, Shannon formulated the channel coding problem and characterized the maximum rate such that the probability of error can be driven to zero as the blocklength increases. Since Shannon's work, a vast body of literature has followed on the fundamental interplay between the coding rate, the error probability, and the blocklength, which can provide more refined insights for reliable communication systems. 
One approach to characterize the fundamental interplay  is to study the best exponential decay rate of the error probability (so-called \emph{error exponent}) for a given rate. Classical results characterized the best error exponents for a large class of channels  \cite{Gallager:65,ShannonGallager:67,Gallager:68,CsiszarKorner:11}.
Another approach is to fix the error probability at a non-vanishing quantity and study the best (largest) achievable rate for information transmission. Strassen \cite{Strassen} considered discrete memoryless channels (DMCs) and showed that the rate backoff from capacity scales as $\frac{1}{\sqrt{n}}$ with the constant of proportionality related to the so-called {\em dispersion}~\cite{PolyanskiyPoorVerdu:10}.  Polyanskiy {\em et al.} \cite{PolyanskiyPoorVerdu:10}   refined the asymptotic expansions and also compared the normal approximation to the finite blocklength (non-asymptotic) fundamental limits. 
For practical code design, it would be more relevant to simultaneously require the rate to approach to capacity and the error probability to decay to zero.  Altu\u{g} and Wagner~\cite{AltugWagner:14} established the best decay rate of the error probability when the rate approaches to the capacity strictly slower than $\frac{1}{\sqrt{n}}$. 
Polyanskiy and Verd\'u~\cite{PolyanskiyVerdu:10} relaxed some assumptions in the conference version of Altu\u{g} and Wagner's work~\cite{altug2010moderate}. 
In the aforementioned three approaches, the asymptotic behaviors of the coding rate and error probability in the blocklength are closely related to the large deviations, central limit, and moderate deviations theorems \cite{DemboZeitouni:09}, respectively, and hence the regime considered in each approach is often named after the related theorem. Table~\ref{table:regimes} summarizes the asymptotic behaviors in the three regimes.\footnote{In this paper, we use the usual asymptotic notations $o(\cdot), O(\cdot), \Theta(\cdot), \omega(\cdot)$, and $\Omega(\cdot)$ (see e.g., \cite{Knuth:76}) with the additional restriction that the sequences in them are positive.}

In addition to the block coding setup, it is also of practical interest to study a streaming transmission setup. In this setup, the sender must encode a stream of messages in a sequential fashion and the receiver must also decode the stream of messages in order. Some natural applications include control systems and multimedia applications.
Such a streaming setup is fundamentally different from the block coding setup as different messages have different decoding deadlines, yet overlapping transmission durations.  In the large deviations regime, the  streaming transmission has been studied in e.g., \cite{Schulman:96,sahai_thesis, SukhavasiHassibi:11,KhistiDraper:14,DraperChangSahai:14,DraperKhisti:11,Sahai:08}. The coding schemes are based on an approach known as tree coding and its variants. The only work that treats the converse is \cite{Sahai:08} for a bit-wise  setup. 
On the other hand, the streaming transmission in the moderate deviations and central limit regimes was first considered in \cite{LeeTanKhisti:arxiv15} for a streaming scenario where an encoder observes a new message in the beginning of each block and a decoder decodes each message after a delay of $T$ blocks. 
The work \cite{LeeTanKhisti:arxiv15} showed the following achievability results: (i) in the moderate deviations regime, the moderate deviations constant improves at least by a factor of $T$ and (ii) in the central limit regime, the dispersion is improved (reduced) by approximately a factor of $\sqrt{T}$ for a wide range of channel parameters. 
To the best of our knowledge, however, there has been no prior work on  converse parts for the streaming transmission in the moderate deviations and central limit regimes, and thus the characterization of the exact asymptotic behavior in these two regimes remains open.

In this paper, we characterize the exact moderate deviation asymptotics for streaming transmission over output symmetric channels for a certain range of moderate deviations scalings. Our streaming setup is the same as that in \cite{LeeTanKhisti:arxiv15} except the following  differences: (i) an additional parameter corresponding to the total number of streaming messages is introduced and (ii) the maximal probability of error over streaming messages is considered.\footnote{In \cite{LeeTanKhisti:arxiv15}, the number of streaming messages is assumed to be infinite and the average probability of error over streaming messages is considered.} 
Our results show that the moderate deviations constant for output symmetric channels improves exactly by a factor of $T$ compared to classical channel coding for a certain range of moderate deviations scalings under some mild conditions on the number of streaming messages. 
The achievability part of our result can be proved by manipulating the result in \cite{LeeTanKhisti:arxiv15} taking into account the aforementioned differences. 
Hence, our contribution is more on the converse part. We prove the converse for a more powerful decoder to which some extra information is fedforward. 
The converse proof consists of the following three steps: (i) prove that for such a feedforward decoder, it suffices to utilize the channel output sequences only in recent $T$ blocks, (ii)  lower bound the maximal error probability over a certain number of messages under an auxiliary channel, and (iii) translate the result back to the original channel by using a change-of-measure technique. This flow of the proof is similar with that in \cite[Section IV]{Sahai:08}. However, due to the inherent differences between our problem setting and that of \cite[Section IV]{Sahai:08}, our proof involves novel technical treatments. Most importantly, we are interested in the moderate deviations regime, while the work \cite[Section IV]{Sahai:08} assumes the large deviations regime. Thus, we need to delicately balance the scaling of the parameters involved in the aforementioned three steps so that those parameters have negligible impact on both the rate backoff and the moderate deviations constant. In particular, we establish a change-of-measure lemma in the moderate deviations regime where the speed of decay of the remainder term in the exponent (which affects the moderate deviation constant) is carefully characterized. In addition, since the work \cite[Section IV]{Sahai:08} analyzes the bit-wise error under the bit-wise encoding and decoding operations, we develop proof techniques adapted to the message-wise error under the block-wise operations. 

The rest of this paper is organized as follows. We formally state our streaming setup in  Section~\ref{sec:model} and present the main result in Section~\ref{sec:main}. The converse and achievability parts are proved in Sections~\ref{sec:converse} and~\ref{sec:achievability}, respectively. We conclude this paper in Section~\ref{sec:conclusion}.

\subsection{Notation}
For two integers $i$ and $j$, $[i:j]$ denotes the set $\{i,i+1,\cdots, j\}$. For constants $x_1,\cdots, x_k$ and $S\subseteq [1:k]$, $x_S$  denotes the  vector $(x_j: j\in S)$ and $x^j_i$ denotes $x_{[i:j]}$ where the subscript is omitted when $i=1$, i.e., $x^j=x_{[1:j]}$. 
This notation is naturally extended for vectors $\mathbf{x}_1,\cdots, \mathbf{x}_k$, random variables $X_1,\cdots, X_k$, and random vectors $\mathbf{X}_1, \cdots, \mathbf{X}_k$. All logs are to base 2.
For a DMC $(\mathcal{X},\mathcal{Y}, \{W(y|x): x\in \mathcal{X}, y\in \mathcal{Y}\} )$ and an input distribution $P\in \mathcal{P}(\mathcal{X})$, where $\mathcal{P}(\mathcal{X})$ denotes the set of all probability distributions on $\mathcal{X}$, we use the following standard notation and terminology in information theory: 
\begin{itemize}
\item Type of a vector $x^l$ of length $l$:
\begin{align}
P_{x^l}\in \mathcal{P}(\mathcal{X}) \mbox{ such that }P_{x^l}(x)=\frac{N_x(x^l)}{l} \mbox{ for } x\in \mathcal{X},
\end{align}
where $N_{x}(x^l)$ denotes the number of occurrences of $x$ in $x^l$. 

\item Information density:
\begin{align}
i(x;y):=\log \frac{W(y|x)}{PW(y)}, \label{eqn:info_dst}
\end{align}
where $PW(y):=\sum_{x\in \mathcal{X}}P(x)W(y|x)$ denotes the output distribution. We note that $i(x;y)$ depends on $P$ and $W$ but this dependence is suppressed. The definition \eqref{eqn:info_dst} can be generalized  for two vectors $x^l$ and $y^l$ of length $l$ as follows:  
\begin{align}
i(x^l;y^l):=\sum_{j=1}^l i(x_j;y_j).
\end{align}

\item Mutual information:
\begin{align}
I(P,W)&:=\E[i(X;Y)]\\
&=\sum_{x\in \mathcal{X}}\sum_{y\in \mathcal{Y}} P(x)W(y|x)\log\frac{W(y|x)}{PW(y)}.
\end{align}

\item Unconditional information variance:
\begin{align}
U(P,W)&:=\var[i(X;Y)].
\end{align}

\item Conditional information variance: 
\begin{align}
V(P,W)&:=\E[\var[i(X;Y)|X]].
\end{align} 

\item Capacity: 
\begin{align}
C=C(W):=\max_{P\in \mathcal{P}(\mathcal{X})} I(P,W).
\end{align}

\item Set of capacity-achieving input distributions: 
\begin{align}
\Pi=\Pi(W):=\{P\in \mathcal{P}(\mathcal{X}): I(P,W)=C(W)\}.
\end{align}

\item Channel dispersion:
\begin{align}
\nu=\nu(W)&:=\min_{P\in \Pi} V(P,W) \label{eqn:dispersion}\\
&\overset{(a)}{=}\min_{P\in \Pi} U(P,W),
\end{align}
where $(a)$ is from \cite[Lemma 62]{PolyanskiyPoorVerdu:10}, where it is shown that $V(P,W)=U(P,W)$ for all $P\in \Pi$. 

\item Haroutunian exponent at rate $R$:
\begin{align}
E^+(R)&:=\min_{V:C(V)\leq R} \max_{P\in \mathcal{P}(\mathcal{X})} D(V\|W|P)\\
&= \min_{V:C(V)\leq R} \max_{x\in \mathcal{X}} D(V(\cdot|x)\|W(\cdot|x)),
\end{align}
where $D(V(\cdot|x)\|W(\cdot|x))$ and  $D(V\|W|P)$ are the divergence and the conditional divergence, respectively, defined as 
\begin{align}
D(V(\cdot|x)\|W(\cdot|x))&:=\sum_{y\in \mathcal{Y}}V(y|x)\log \frac{V(y|x)}{W(y|x)} \\
D(V\|W|P)&:= \sum_{x\in \mathcal{X}}P(x)D(V(\cdot|x)\|W(\cdot|x)).
\end{align}
In \cite{CsiszarKorner:11, Haroutunian:77}, it is shown that $E^+(R)$ is an upper bound on the block-coding error exponent with fixed-length coding and noiseless output feedback. 

\item Sphere-packing exponent at rate $R$:
\begin{align}
E_{\mathrm{SP}}(R)&:= \max_{P\in \mathcal{P}(\mathcal{X})}\min_{V:I(P,V)\leq R} D(V\|W|P).
\end{align}
$E_{\mathrm{SP}}(R)$ is known to be an upper bound on the block-coding error exponent without feedback. It is clear that $E_{\mathrm{SP}}(R)\leq E^+(R)$. It is known  that $E_{\mathrm{SP}}(R)=E^+(R)$ for output symmetric DMCs, where a DMC is called \emph{output symmetric} according to \cite{Gallager:68} if $\mathcal{Y}$ can be partitioned into disjoint subsets in such a way that for each subset, the matrix of transition probabilities has the property that each row is a permutation of each other row and each column is a permutation of each other column. 
 
\end{itemize}

\section{Model} \label{sec:model}
Consider a DMC $(\mathcal{X},\mathcal{Y}, \{W(y|x): x\in \mathcal{X}, y\in \mathcal{Y}\} )$. 
For block channel coding, a code is usually defined with three parameters, i.e., the blocklength, the cardinality of message (or rate), and the probability of error. For a streaming setup, we introduce two more parameters  corresponding to the decoding delay and the number of total streaming messages. Formally, a streaming code  is defined as follows:
\begin{definition}[Streaming code]	\label{def:basic}
An $(n,M,\epsilon,T,S)$-streaming code consists of
\begin{itemize}
\item a sequence of messages $\{G_k\}_{k\in[1:S]}$ each distributed uniformly over $\mathcal{G}:= [1:M]$,
\item a sequence of encoding functions $\phi_k: \mathcal{G}^{\min\{k,S\}}\rightarrow \mathcal{X}^n$ for $k\in [1:S+T-1]$ that maps the message sequence $G^{\min\{k,S\}}\in  \mathcal{G}^{\min\{k,S\}}$ to the channel input codeword $\mathbf{X}_k \in \mathcal{X}^n$, and
\item a sequence of decoding functions $\psi_k: \mathcal{Y}^{(k+T-1)n}\rightarrow \mathcal{G}$ for $k\in [1:S]$ that maps the channel output sequences $\mathbf{Y}^{(k+T-1)} \in \mathcal{Y}^{(k+T-1)n}$ to a message estimate $\hat{G}_k\in \mathcal{G}$,
\end{itemize}
that satisfies  
\begin{align}
\max_{k\in [1:S]}\Pr(\hat{G}_k\neq G_k) \leq \epsilon, 
\end{align}
i.e., the maximal probability of error over all $S$ messages does not exceed $\epsilon$. 
\end{definition}

\begin{figure}[t]
 \centering
  {
  \includegraphics[width=140mm]{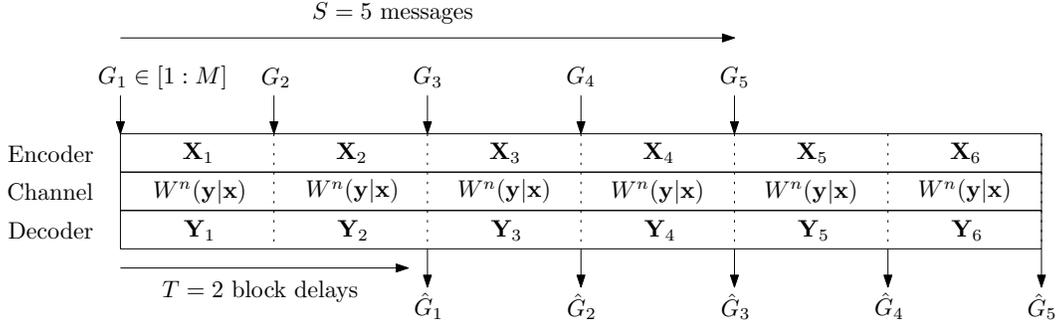}}
  \caption{Our streaming setup is illustrated  for the case of $T=2$ and $S=5$. A total of five messages ($S=5$) are sequentially encoded and are sequentially decoded after the delay of two blocks ($T=2$). } \label{fig:model}
\end{figure} 
For notational convenience, let $T_k$ denote $k+T-1$ for two positive integers $T$ and $k$. Fig. \ref{fig:model} illustrates our streaming setup for the case of $T=2$ and $S=5$. Since $S=5$, a total of five messages are sequentially encoded and decoded. In the beginning of block $k\in [1:5]$, the encoder receives a new message $G_k$ and generates a codeword $\mathbf{X}_k$ as a function of all the past and current messages $G^k$. In block 6, there is no new message and the encoder generates a codeword $\mathbf{X}_6$ as a function of all the past messages $G^5$. The encoder transmits $\mathbf{X}_k$ over the channel in block $k\in [1:6]$. Since $T=2$, the decoder decodes message $G_k$ for $k\in [1:5]$ at the end of block $k+1$, as a function of all the past received channel output sequences $\mathbf{Y}^{k+1}$. 

In this paper, we are interested in the following fundamental limit on the error probability:
\begin{align}
\epsilon^*(n,M,T,S)=\inf\{\epsilon: \exists (n,M,\epsilon,T,S)\mbox{-streaming code} \}.
\end{align}


\section{Main Result} \label{sec:main}
The following theorem presents the main result of this paper on the optimal behavior of $\epsilon^*(n,M,T,S)$ in the moderate deviations regime.  
\begin{theorem} \label{thm:match} For an output symmetric  DMC $(\mathcal{X},\mathcal{Y}, \{W(y|x): x\in \mathcal{X}, y\in \mathcal{Y}\} )$ with $\nu>0$, consider sequences $M_n$ and $S_n$ such that $\log M_n=nC-n^{1-t}$ and $S_n=\omega(n^t) \cap \exp\{o(n^{1-2t})\}$ for $0<t<\frac{1}{3}$. Then, 
\begin{align}
\lim_{n\rightarrow \infty}-\frac{1}{n^{1-2t}}\log \epsilon^*(n,M_n,T,S_n)=\frac{T}{2\nu}.\label{eqn:matching}
\end{align}
\end{theorem}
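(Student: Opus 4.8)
The plan is to prove matching achievability and converse bounds for the moderate deviations constant, with the bulk of the work lying in the converse. For the achievability direction, I would invoke the coding scheme of \cite{LeeTanKhisti:arxiv15}, which jointly encodes and decodes fresh and past messages via a tree-coding-type construction, and which already yields a moderate deviations constant of at least $T/(2\nu)$ under an \emph{average} error criterion with infinitely many streaming messages. To adapt this to the present setting, I would run the scheme on the finite horizon $S_n$ and control the maximal error probability by a union bound over the $S_n$ messages: since $S_n = \exp\{o(n^{1-2t})\}$, the extra factor $S_n$ contributes only a $o(n^{1-2t})$ term to $\log\epsilon^{*}$, hence does not affect the exponential order $n^{1-2t}$ or the constant. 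This establishes $\liminf_{n\to\infty} -\frac{1}{n^{1-2t}}\log\epsilon^{*}(n,M_n,T,S_n) \ge \frac{T}{2\nu}$.

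The converse is the substantive half, and I would follow the three-step strategy outlined in the introduction, mirroring \cite[Section IV]{Sahai:08} but recast for the moderate deviations regime and message-wise (rather than bit-wise) errors. Step one: pass to a genie-aided decoder that is fed forward the outcomes of all messages decoded more than $T$ blocks earlier (or, equivalently, all earlier messages), and argue that for such a decoder, decoding $G_k$ effectively only depends on the channel outputs in the $T$ most recent blocks $[\,k : k+T-1\,]$ — this is where the delay-$T$ structure is exploited, and it reduces the streaming converse to a statement about an effective ``super-block'' of length $Tn$. Step two: lower-bound the maximal error probability over a suitably chosen window of consecutive messages under an \emph{auxiliary channel} $V$ chosen so that its capacity is at most the operating rate $R_n = C - n^{-t}$; here a sphere-packing / Haroutunian-type argument (using $E^{+}(R) = E_{\mathrm{SP}}(R)$ for output symmetric channels) applied over the length-$Tn$ super-block yields an error lower bound governed by $T \cdot E_{\mathrm{SP}}(R_n)$, and the local quadratic behavior $E_{\mathrm{SP}}(C - \delta) \sim \delta^{2}/(2\nu)$ as $\delta \downarrow 0$ converts this into the exponent $\frac{T}{2\nu} n^{1-2t}$. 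Step three: transfer the bound from $V$ back to $W$ via a change-of-measure argument.

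The main obstacle — and the technical heart of the paper — is step three: the change-of-measure lemma in the moderate deviations regime. The naive change of measure between $V^{\otimes Tn}$ and $W^{\otimes Tn}$ introduces a Radon–Nikodym factor whose log is a sum of $Tn$ i.i.d.\ terms with mean $\sim Tn \cdot D(V\|W|P) = \Theta(n^{1-2t})$ and fluctuations of order $\sqrt{Tn} = \Theta(n^{1/2})$; one must show that these fluctuations, together with the remainder terms from the local expansion of $E_{\mathrm{SP}}$ and from the windowing/genie reductions, are all $o(n^{1-2t})$ so that they do not perturb the constant $T/(2\nu)$. This forces the constraint $t < \frac13$ (so that $n^{1/2} = o(n^{1-2t})$, up to the relevant slack) and the two-sided condition $S_n = \omega(n^{t}) \cap \exp\{o(n^{1-2t})\}$: the lower bound $\omega(n^{t})$ guarantees the message window is long enough for the sphere-packing estimate to kick in, while the upper bound keeps the union bound harmless. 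I would isolate this as a standalone lemma, carefully tracking the speed of decay of every remainder term in the exponent, and then assemble steps one through three to obtain $\limsup_{n\to\infty} -\frac{1}{n^{1-2t}}\log\epsilon^{*}(n,M_n,T,S_n) \le \frac{T}{2\nu}$, which together with the achievability bound gives \eqref{eqn:matching}.
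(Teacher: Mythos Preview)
Your proposal is essentially correct and follows the paper's approach: achievability via the scheme of \cite{LeeTanKhisti:arxiv15} with a union bound over the $S_n$ messages, and a three-step converse (feedforward reduction to the last $T$ blocks, auxiliary-channel lower bound, change-of-measure) adapted from \cite{Sahai:08} to the moderate deviations regime.

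Two small imprecisions are worth correcting. First, in the paper the factor $T\cdot E_{\mathrm{SP}}(R_n)$ does \emph{not} arise in step two: under the auxiliary channel $V_n^*$ (chosen with $C(V_n^*)\le R_n-2\delta_n$, strictly below the effective streaming rate $R_n-\delta_n$) the maximal error is shown only to be at least some $\delta_n'=\Theta(\delta_n/(-\log\delta_n))$ via a Fano-type counting argument; the exponent $Tn\,E_{\mathrm{SP}}(R_n-2\delta_n)$ appears only in step three, because the change-of-measure cost over the $Tn$-length super-block is $Tn\,D(V_n^*\|W|P_{x^{Tn}})\le Tn\,E^+(R_n-2\delta_n)=Tn\,E_{\mathrm{SP}}(R_n-2\delta_n)$ by output symmetry. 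Second, your heuristic for $t<\tfrac13$ (that $n^{1/2}=o(n^{1-2t})$) would actually give $t<\tfrac14$; in the paper the constraint $t<\tfrac13$ comes from the typical-set construction inside the change-of-measure lemma, where the parameters $\gamma_{1,n},\gamma_{2,n}$ must simultaneously satisfy $\gamma_{1,n}^2\gamma_{2,n}\,n\to\infty$ (so the typical set has probability $\to 1$) and $(\gamma_{1,n}+2\gamma_{2,n})\gamma_n'=o(n^{-2t})$ (so the remainder does not affect the constant), and with $\gamma_n'=O(n^{-t})$ the optimal balance forces $3t<1$.
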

We note that the range of $S_n$ in Theorem \ref{thm:match} is quite extensive  since the order of $\exp\{n^{1-2t}\}$ is much larger than that of $n^t$. Theorem~\ref{thm:match} states that for an output symmetric DMC in a streaming setup with such a broad range of $S_n$, the moderate deviations constant\footnote{The moderate deviations constant is defined as the LHS of \eqref{eqn:matching} if the limit exists, see e.g., \cite[Definition 4]{ZhouTanMotani:arxiv16}. Our result shows that the limit exists for the range $(0,\frac{1}{3})$ of the moderate deviations scalings.} improves by a factor of $T$ for the range $(0,\frac{1}{3})$ of the moderate deviations scalings, which is a smaller set of scalings relative to the typical range  $(0,\frac{1}{2})$ (cf. Table \ref{table:regimes}). 
The converse and the achievability of Theorem~\ref{thm:match} are established by the following two propositions, respectively. The proofs are provided in Sections~\ref{sec:converse} and~\ref{sec:achievability}, respectively.  

\begin{proposition} \label{prop:converse}
For an output symmetric  DMC $(\mathcal{X},\mathcal{Y}, \{W(y|x): x\in \mathcal{X}, y\in \mathcal{Y}\} )$ with $\nu>0$, any sequence of $(n,M_n,\epsilon_n,T,S_n)$-streaming codes such that $\log M_n=nC-n^{1-t}$  and $S_n=\omega(n^t)$ for $0<t<\frac{1}{3}$ should satisfy   
\begin{align}
\limsup_{n\rightarrow \infty} -\frac{1}{n^{1-2t}} \log \epsilon_n \leq \frac{T}{2\nu}. \label{eqn:MD_conv_N}
\end{align}
\end{proposition}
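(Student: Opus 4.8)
The plan is to prove the converse bound \eqref{eqn:MD_conv_N} by following the three-step outline described in the introduction: (i) reduce to a \emph{feedforward} decoder that uses only the channel outputs of the $T$ most recent blocks; (ii) lower bound the maximal error probability under an auxiliary (worst-case) channel; and (iii) transfer this bound back to the original channel via a carefully quantified change-of-measure lemma. I would begin with step (i): suppose we are given an arbitrary $(n,M_n,\epsilon_n,T,S_n)$-streaming code. Augment the decoder of message $G_k$ by feeding forward the past messages $G^{k-1}$ (equivalently, the past codewords); this can only decrease the error probability, so a converse for the augmented system implies one for the original. Conditioned on $G^{k-1}=g^{k-1}$, the codewords $\bX_1,\dots,\bX_{k-1}$ are fixed, so the only randomness relevant to decoding $G_k$ that is carried by $\bY^{k-1}$ is already known; hence the feedforward decoder may as well discard $\bY^{k-1}$ and base its decision on $\bY_{k}^{T_k}=(\bY_k,\dots,\bY_{k+T-1})$, i.e.\ on $Tn$ channel uses. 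This turns the problem into a block-coding converse over $Tn$ channel uses for a codebook of size $M_n$, but with the crucial subtlety that the encoder is \emph{causal} across blocks and later codewords $\bX_{k+1},\dots,\bX_{k+T-1}$ depend on future messages $G_{k+1},\dots,G_{k+T-1}$ that act as interference/randomization.

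For step (ii), I would fix a rate-backoff-matched auxiliary channel: roughly, replace $W$ by a channel $V$ achieving (nearly) the Haroutunian/sphere-packing exponent $E^+(R)=E_{\mathrm{SP}}(R)$ at the relevant rate $R = \frac{1}{Tn}\log M_n = \frac{C}{T} - \frac{n^{-t}}{T}$, i.e.\ $V$ with $C(V)$ slightly below $R$, chosen so that $D(V\|W|P)$ is minimized. Since the code must succeed for \emph{every} one of the $S_n$ messages (maximal error), and since the effective blocklength is $Tn$, a sphere-packing/Fano-type argument over the $Tn$-length output shows that under $V$ the error probability for decoding $G_k$ cannot be too small; making this work across all $k\in[1:S_n]$ simultaneously, using the union-over-messages structure, is where the hypothesis $S_n=\omega(n^t)$ enters — we need enough messages so that at least one of them is "hard" in the worst case. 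Here the moderate-deviations scaling of $R$ forces us to use the quadratic (dispersion) behavior of $E_{\mathrm{SP}}$ near capacity: $E_{\mathrm{SP}}(C-\delta) \sim \frac{\delta^2}{2\nu}$ as $\delta\to 0$, so with $\delta = \Theta(n^{-t})$ over $Tn$ channel uses the exponent is $Tn\cdot\frac{(n^{-t}/T)^2}{2\nu}(1+o(1)) = \frac{n^{1-2t}}{2T\nu}(1+o(1))$ — and we will need to recover a factor $T$ rather than $1/T$, which comes from the fact that the "per-block" rate backoff is $n^{-t}$ but the penalty is assessed over $T$ blocks jointly; I'd need to be careful that the correct bookkeeping yields $\frac{T}{2\nu}$ and not $\frac{1}{2T\nu}$ (this is exactly the point of the streaming gain, so it deserves attention).

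For step (iii), the change-of-measure lemma: the event that the feedforward decoder errs is a subset of the output space; I would bound its original-channel probability $\Pr_W$ in terms of its auxiliary-channel probability $\Pr_V$ times $\exp\{-Tn(D(V\|W|P)+o(1))\}$ plus a controllable remainder, the key being that the "speed of decay of the remainder term in the exponent" must be $o(n^{1-2t})$ so it does not corrupt the moderate-deviations constant. Concretely, one splits the output space according to whether the empirical divergence between the observed output statistics and $W$ is close to $D(V\|W|P)$; on the typical part one uses a Chernoff/tilting bound, and on the atypical part one uses that its $V$-probability is bounded below by a constant (from step (ii)) while its $W$-probability is exponentially small. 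The constraint $t<\frac{1}{3}$ is what makes the various error terms — the rate cost of feeding forward $T-1$ extra messages (a $\Theta(n^{-t}/?)$ correction), the fluctuation terms in the sphere-packing bound (order $\sqrt{n}\log n$ in the exponent, i.e.\ $n^{-1/2+o(1)}$ relative to $n$), and the change-of-measure remainder — all vanish on the $n^{1-2t}$ scale.

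\medskip

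\noindent\textbf{Main obstacle.} I expect the hardest part to be step (iii) combined with the precise scaling audit: producing a change-of-measure lemma valid in the moderate-deviations regime in which the remainder in the exponent is provably $o(n^{1-2t})$ — not merely $o(n)$, which would suffice in the large-deviations setting of \cite{Sahai:08} but is useless here — while simultaneously ensuring that the auxiliary channel $V$ is chosen close enough to $W$ that $Tn\,D(V\|W|P) = \frac{T}{2\nu}n^{1-2t}(1+o(1))$, yet far enough that the sphere-packing lower bound on $\Pr_V(\text{error})$ is a positive constant (or at worst subexponential in $n^{1-2t}$). Balancing these two competing requirements — $V$ near $W$ for the transfer, $V$ far from capacity for the sphere-packing bound — under the maximal-error criterion over $S_n$ messages with $S_n$ subexponential in $n^{1-2t}$, is the technical crux; everything else (the feedforward reduction, the causal-interference handling via conditioning on future messages, the Fano step) is comparatively routine.
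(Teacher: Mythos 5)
Your three-step outline coincides with the paper's (feedforward reduction, auxiliary channel, change of measure), and your step (i) and the general shape of your step (iii) (typical-set tilting with a remainder that must be $o(n^{-2t})$ per channel use, with $t<\tfrac13$ controlling the balance) are on track. But there is a genuine gap exactly at the point you flag and leave unresolved: the rate bookkeeping in step (ii), which is where the factor $T$ is won or lost. Treating the decoding of a single $G_k$ as a block code of per-use rate $\frac{1}{Tn}\log M_n = \frac{C}{T}-\frac{n^{-t}}{T}$ over $Tn$ uses is not a workable reduction: an auxiliary channel whose capacity lies below $C/T$ is at \emph{constant} (not $O(n^{-2t})$) conditional divergence from $W$, so the change-of-measure cost over $Tn$ uses is $\Theta(n)$ in the exponent and the resulting bound is vacuous on the $n^{1-2t}$ scale; and your alternative arithmetic $Tn\cdot\frac{(n^{-t}/T)^2}{2\nu}=\frac{n^{1-2t}}{2T\nu}$ cannot be the outcome of a correct argument, since a converse with constant $\frac{1}{2T\nu}$ would contradict the achievability constant $\frac{T}{2\nu}$.

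The missing idea is to enforce the rate constraint \emph{cumulatively} while paying the divergence cost only \emph{locally}. The paper considers the first $S_n^*=\min\{S_n,\exp\{n^{(1-3(t+\zeta))/4}\}\}$ messages: these $S_n^* nR_n$ bits must be recovered from $T_{S_n^*}n$ channel uses, so the effective per-use rate is $R_n-\delta_n$ with $\delta_n=\Theta(R_n/S_n^*)=o(n^{-t})$ precisely because $S_n=\omega(n^t)$ (this, not a ``union over messages so one is hard'' device, is the role of that hypothesis). The auxiliary channel $V_n^*$ is then the Haroutunian minimizer with capacity just below $R_n-2\delta_n\approx C-n^{-t}$, so $\max_P D(V_n^*\|W|P)=E^+(C-n^{-t}-o(n^{-t}))=O(n^{-2t})$, and a Fano-type argument on the bit error sequences (using that $\mathbf{B}^{S_n^*}$ is recoverable from the error sequences and $\mathbf{Y}^{T_{S_n^*}}$) produces an index $k^*\le S_n^*$ whose error under $V_n^*$ is at least $\delta_n'=\Theta(\delta_n/(-\log\delta_n))$, and in fact at least $\delta_n'/2$ for a $\delta_n'/2$ fraction of message realizations. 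The change of measure is then applied only over the $T$ blocks $[k^*:T_{k^*}]$ relevant to message $k^*$, giving $\epsilon_n\ge \frac{(\delta_n')^2}{8}\exp\{-Tn(E_{\mathrm{SP}}(C-n^{-t}-o(n^{-t}))+o(n^{-2t}))\}$ and hence the constant $\frac{T}{2\nu}$: the rate penalty is assessed per channel use over the whole horizon, but the exponential price is paid over only $Tn$ uses. Finally, the cap on $S_n^*$ (absent from your sketch) is not optional: it keeps $\delta_n'$ decaying slowly enough that the typical-set defect $\exp\{-\Theta(n^{1-3(t+\zeta)})\}$ is $o(\delta_n')$, which is what makes the change-of-measure lemma usable with a positive mass $\delta_n'/4$ on the typical part.
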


\begin{proposition}\label{prop:achievability}
For a DMC $(\mathcal{X},\mathcal{Y}, \{W(y|x): x\in \mathcal{X}, y\in \mathcal{Y}\} )$ with $\nu>0$, there exists a sequence of $(n, M_n, \epsilon_n, T, S_n)$-streaming codes such that  $\log M_n=nC-n^{1-t}$, $S_n=\exp\{o(n^{1-2t})\}$, and $\epsilon_n$ satisfies
\begin{align}
\liminf_{n\rightarrow \infty} -\frac{1}{n^{1-2t}} \log \epsilon_n \geq \frac{T}{2\nu} \label{eqn:MD_achiev}
\end{align}
 for $0<t<\frac{1}{2}$.
\end{proposition}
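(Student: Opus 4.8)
The plan is to prove Proposition~\ref{prop:achievability} by leveraging the achievability result already established in \cite{LeeTanKhisti:arxiv15}, adapting it to the two differences in our model: the finite number $S_n$ of streaming messages and the maximal (rather than average) error criterion. The coding scheme is the tree-coding-type joint encoding/decoding technique of \cite{LeeTanKhisti:arxiv15}, in which the codeword $\mathbf{X}_k$ transmitted in block $k$ is a function of all messages $G^k$ seen so far, and the decoder for message $G_k$ at the end of block $k+T-1$ jointly decodes the relevant recent messages. First I would recall from \cite{LeeTanKhisti:arxiv15} the per-message error bound: for a single message in the infinite-stream setup, the error probability of decoding $G_k$ decays like $\exp\{-(1+o(1))\frac{T}{2\nu}n^{1-2t}\}$ when $\log M_n = nC - n^{1-t}$, for $0<t<\tfrac12$. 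The heart of that bound is a moderate deviations estimate on a sum of $Tn$ i.i.d.\ information-density terms (coming from the $T$ overlapping blocks that carry $G_k$), using a capacity-achieving input distribution; the relevant variance is $Tn\cdot\nu$, which produces the factor $T$ in the exponent.

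The main adaptation is the union bound over the $S_n$ messages to pass from average/per-message error to maximal error. Writing $\epsilon_{n,k}:=\Pr(\hat G_k\neq G_k)$, each $\epsilon_{n,k}$ is bounded (essentially uniformly in $k$, since the per-message analysis in \cite{LeeTanKhisti:arxiv15} does not depend on the block index $k$) by some $\bar\epsilon_n$ with $-\frac{1}{n^{1-2t}}\log\bar\epsilon_n\to\frac{T}{2\nu}$. Then the maximal error is at most $S_n\bar\epsilon_n$, and
\begin{align}
-\frac{1}{n^{1-2t}}\log(S_n\bar\epsilon_n) = -\frac{1}{n^{1-2t}}\log\bar\epsilon_n - \frac{\log S_n}{n^{1-2t}}.
\end{align}
The second term vanishes precisely because $S_n=\exp\{o(n^{1-2t})\}$, so $\log S_n = o(n^{1-2t})$. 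Taking $\liminf$ gives \eqref{eqn:MD_achiev}. One subtlety to address carefully: in \cite{LeeTanKhisti:arxiv15} the stream is infinite, so I should either verify that truncating to $S_n$ messages (with the encoder sending $T-1$ extra flushing blocks, as in Definition~\ref{def:basic}) only helps, or re-run the finite-horizon error analysis directly; either way the per-message bound is unchanged since decoding $G_k$ only ever depends on blocks $k,\dots,k+T-1$, which are present in the finite setup.

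I expect the main obstacle to be bookkeeping rather than conceptual: I must make sure the $o(1)$ remainder in the per-message exponent from \cite{LeeTanKhisti:arxiv15} is uniform over $k\in[1:S_n]$ (so that a single $\bar\epsilon_n$ works for all messages), and that the rate constraint $\log M_n = nC - n^{1-t}$ is compatible with the joint-decoding rate split used there (the joint decoder effectively operates at a slightly reduced rate per block because it must also resolve residual uncertainty about earlier messages, but this loss is $o(n^{1-t})$ and hence negligible for the moderate deviations constant). Once the uniform per-message bound is in hand, the union bound and the growth condition on $S_n$ close the argument immediately. Note also that this proposition holds for the full range $0<t<\tfrac12$; the more restrictive range $0<t<\tfrac13$ in Theorem~\ref{thm:match} is forced only by the converse (Proposition~\ref{prop:converse}), not by the achievability side.
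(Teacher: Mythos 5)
Your proposal follows essentially the same route as the paper: reuse the joint encoding/decoding scheme and per-message random-coding bound of \cite{LeeTanKhisti:arxiv15} (which is uniform in the stream index $k$), pass to the maximal error via a union bound over the $S_n$ messages, and absorb the factor $S_n=\exp\{o(n^{1-2t})\}$ into the exponent so that the moderate deviations constant $T/(2\nu)$ is unaffected, with the full range $0<t<\tfrac12$ available on the achievability side. The only cosmetic difference is that the paper carries out the union bound at the level of $\E_{\mathcal{C}_n}[\max_k \Pr(\hat G_k\neq G_k\,|\,\mathcal{C}_n)]\le \E_{\mathcal{C}_n}[\sum_k \Pr(\hat G_k\neq G_k\,|\,\mathcal{C}_n)]$ and then extracts a deterministic codebook, a derandomization step your write-up leaves implicit but which is routine.
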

\begin{remark}
 The condition $S_n=\omega(n^t)$ in Proposition \ref{prop:converse}  is related to the fact that the  backoff from capacity is $n^{-t}$.  An extreme case of $S_n=\omega(n^t)$ is the usual streaming setup \cite{Schulman:96,sahai_thesis,Sahai:08,SukhavasiHassibi:11,KhistiDraper:14,DraperKhisti:11,DraperChangSahai:14} in which the total number of streaming messages is infinite. On the other hand, the condition $S_n=\exp\{o(n^{1-2t})\}$ in Proposition~\ref{prop:achievability} is related to the fact that  the error probability decays as  $\exp\{-\Theta(n^{1-2t})\}$.
The scenario in which we decode a constant number of streaming messages is an extreme case of $S_n=\exp\{o(n^{1-2t})\}$. 
 
\end{remark}


\section{Converse} \label{sec:converse}
\begin{proof}[Proof of Proposition \ref{prop:converse}]
Consider an output symmetric  DMC $(\mathcal{X},\mathcal{Y}, \{W(y|x): x\in \mathcal{X}, y\in \mathcal{Y}\} )$ with $\nu>0$ and sequences $M_n$ and $S_n$ such that $\log M_n=nC-n^{1-t}$  and $S_n=\omega(n^t)$ for $0<t<\frac{1}{3}$. Let $R_n:=\frac{1}{n}\log M_n=C-n^{-t}$ and $\zeta:=\frac{1}{2}(\frac{1}{3}-t)>0$.  Since the proof is immediate from \cite{AltugWagner:14} for $T=1$, we assume that $T\geq 2$. 

Let us first present a sketch of the proof in the following that consists of three parts:
\begin{enumerate}[(i)]
\item We prove the converse for a more powerful decoder to which some extra information is fedforward. In Section \ref{subsec:fdf}, we present a formal definition of this feedforward decoder (Definition~\ref{def:ff_d}) and show that it is without loss of generality to assume a feedforward decoder that utilizes the channel output sequences only in recent $T$ blocks (Lemma~\ref{lemma:ff_recent}). Then, for such a feedforward decoder, the error probability of the $k$-th message is expressed in terms of some conditional probabilities of the channel output sequences in the $T$ blocks from the $k$-th block (Eq. \eqref{eqn:eq_err}).  

\item In Section \ref{subsec:lb}, we lower bound the maximal error probability over a certain number $S_n^*$ of messages under an auxiliary channel $V_n^*$ (Lemma~\ref{lemma:lb}).
We denote by $k^*$ the message index that contributes to the maximal error probability over $S_n^*$ messages under the auxiliary channel $V_n^*$.  
Then, the lower bound in Lemma~\ref{lemma:lb} is interpreted with respect to the conditional probabilities that are involved with (in the sense of Eq. \eqref{eqn:eq_err}) the error probability of the $k^*$-th message under the auxiliary channel $V_n^*$ (Corollary~\ref{corollary:proportion}).  

\item In Section \ref{subsec:cm}, based on the result in Corollary~\ref{corollary:proportion} under the auxiliary channel $V_n^*$, we  derive a lower bound on the error probability of the $k^*$-th message under the true channel $W$ by applying a technique of change-of-measure from  $V_n^*$ to $W$ (Lemma~\ref{lemma:ch_m}). It turns out that the sphere packing exponent is involved in the exponent of the resultant lower bound. By using an asymptotic bound on the sphere packing exponent (Lemma~\ref{lemma:sp}), an upper bound on the moderate deviations constant is derived. 
\end{enumerate}

Keeping this in mind, the detailed proof is provided in the following. 
\subsection{Feedforward decoder with an optimal sequence of decoding functions} \label{subsec:fdf}
We prove the converse for the following more powerful decoder that has knowledge of  additional information.   
\begin{definition}[Feedforward decoder] \label{def:ff_d} A feedforward decoder has a sequence of decoding functions $\psi_k^f: \mathcal{G}^{k-1}\times \mathcal{Y}^{T_kn} \rightarrow \mathcal{G}$ for $k\in [1:S_n]$ that maps the previous messages $G^{k-1}$ and the channel output sequences $\mathbf{Y}^{T_k} \in \mathcal{Y}^{T_kn}$ to a message estimate $\hat{G}_k\in \mathcal{G}$. 
\end{definition}

The following lemma states that it suffices for a feedforward decoder to consider decoding functions that utilize the channel output sequences only in recent $T$ blocks. The proof is relegated at the end of this section. 
\begin{lemma} \label{lemma:ff_recent} For a feedforward decoder, there exists a sequence of decoding functions $\psi_k^*: \mathcal{G}^{k-1}\times \mathcal{Y}^{Tn} \rightarrow \mathcal{G}$ for $k\in [1:S_n]$ that maps the previous messages $G^{k-1}$ and the recent $T$-block channel output sequences $\mathbf{Y}^{T_k}_{k} \in \mathcal{Y}^{Tn}$ to a message estimate $\hat{G}_k\in \mathcal{G}$ such  that 
\begin{align}
\Pr\Big(G_k\neq \psi_k^*(G^{k-1}, \mathbf{Y}^{T_k}_k)\Big)\leq \Pr\Big(G_k\neq \psi_k^f(G^{k-1}, \mathbf{Y}^{T_k})\Big)
\end{align} 
for any sequence of decoding functions $\psi_k^f$ for $k\in [1:S_n]$. 
\end{lemma}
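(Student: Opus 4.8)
The plan is to show that the channel output blocks $\mathbf{Y}_1,\dots,\mathbf{Y}_{k-1}$ in the "far past'' carry no additional information about $G_k$ once $G^{k-1}$ is known to the decoder, and therefore an optimal feedforward decoder may discard them. Concretely, I would argue that $G_k$ is conditionally independent of $\mathbf{Y}^{k-1}$ given $(G^{k-1},\mathbf{Y}_k^{T_k})$, where $\mathbf{Y}_k^{T_k}=(\mathbf{Y}_k,\dots,\mathbf{Y}_{T_k})$ denotes the recent $T$ blocks. The key structural fact is that each codeword $\mathbf{X}_j$ is a deterministic function of $G^{\min\{j,S_n\}}$, so for $j\le k-1$ the block $\mathbf{X}_j=\phi_j(G^j)$ is a function of $G^{k-1}$ alone; hence, conditioned on $G^{k-1}$, the past outputs $\mathbf{Y}^{k-1}$ are generated through the memoryless channel from inputs that are themselves fixed by $G^{k-1}$, and they depend on the future message $G_k$ only through $G^{k-1}$. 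Thus $G_k \;\indep\; \mathbf{Y}^{k-1} \mid G^{k-1}$, and moreover, since the channel is memoryless across blocks, $G_k \;\indep\; \mathbf{Y}^{k-1} \mid (G^{k-1},\mathbf{Y}_k^{T_k})$ as well.

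With this Markov relation in hand, the second step is the standard optimality-of-the-MAP-rule argument. For any feedforward decoding function $\psi_k^f(G^{k-1},\mathbf{Y}^{T_k})$, define
\begin{align}
\psi_k^*(g^{k-1},\mathbf{y}_k^{T_k}) := \argmax_{g\in\mathcal{G}} \Pr\big(G_k=g \,\big|\, G^{k-1}=g^{k-1},\, \mathbf{Y}_k^{T_k}=\mathbf{y}_k^{T_k}\big),
\end{align}
with ties broken arbitrarily. This is a valid decoding function of the form required by the lemma. The MAP rule minimizes the conditional error probability $\Pr(G_k\neq \hat G_k \mid G^{k-1},\mathbf{Y}_k^{T_k})$ pointwise over all estimators that are functions of $(G^{k-1},\mathbf{Y}_k^{T_k})$; and by the Markov relation above, any estimator that is a function of $(G^{k-1},\mathbf{Y}^{T_k})$ cannot do better than the best estimator based on $(G^{k-1},\mathbf{Y}_k^{T_k})$, because conditioning $\psi_k^f$'s output further on $\mathbf{Y}^{k-1}$ does not reduce the posterior uncertainty in $G_k$. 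Taking expectations over $(G^{k-1},\mathbf{Y}^{T_k})$ then yields $\Pr(G_k\neq\psi_k^*(G^{k-1},\mathbf{Y}_k^{T_k}))\le \Pr(G_k\neq\psi_k^f(G^{k-1},\mathbf{Y}^{T_k}))$, as claimed.

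The only real subtlety — and the step I would write out most carefully — is the precise justification of the conditional independence $G_k \indep \mathbf{Y}^{k-1}\mid (G^{k-1},\mathbf{Y}_k^{T_k})$, because the encoding is sequential and the blocks share overlapping message dependencies. One must check that, conditioned on $G^{k-1}$, the pair $(\mathbf{Y}^{k-1})$ and $(G_k,\mathbf{Y}_k^{T_k})$ factor appropriately: the past inputs $\mathbf{X}^{k-1}$ are fixed functions of $G^{k-1}$, the past outputs $\mathbf{Y}^{k-1}$ depend only on $\mathbf{X}^{k-1}$ through the memoryless channel, while $G_k$ is independent of everything in the past given $G^{k-1}$ (indeed $G_k$ is independent of $G^{k-1}$ outright, being i.i.d.\ uniform), and the recent outputs $\mathbf{Y}_k^{T_k}$ are generated from inputs $\mathbf{X}_k,\dots,\mathbf{X}_{T_k}$ that are functions of $G^{\min\{T_k,S_n\}}$ through an independent channel realization. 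A clean way to present this is to write out the joint distribution $\Pr(g^{S_n},\mathbf{y}^{T_{S_n}})$ explicitly as a product over blocks and over symbols, and then observe that the conditional law $\Pr(g_k,\mathbf{y}_k^{T_k}\mid g^{k-1},\mathbf{y}^{k-1})$ does not depend on $\mathbf{y}^{k-1}$. I expect this bookkeeping, rather than any deep idea, to be the main obstacle; everything downstream is the textbook MAP argument.
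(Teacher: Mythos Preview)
Your proposal is correct and follows essentially the same approach as the paper: establish the Markov chain $\mathbf{Y}^{k-1}-(G^{k-1},\mathbf{Y}_k^{T_k})-G_k$ from the causal encoder and memoryless channel, then define $\psi_k^*$ as the MAP rule based on $(G^{k-1},\mathbf{Y}_k^{T_k})$ and invoke MAP optimality. The paper's proof is terser about verifying the Markov chain, but your more careful factorization argument is exactly the right way to justify it.
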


Hence, without loss of generality, we prove \eqref{eqn:MD_conv_N} for a sequence of $(n,M_n,\epsilon_n,T,S_n)$-streaming codes with a feedforward decoder that has an optimal sequence of decoding functions  $\psi_k^*$ for $k\in [1:S_n]$ described in Lemma~\ref{lemma:ff_recent}. Let $A_{k}(g^{k})$ for $k\in [1:S_n]$ and $g^k\in \mathcal{G}^k$ denote the set of channel output sequences $\mathbf{y}_{k}^{T_{k}}$ that causes erroneous decoding of the $k$-th message when $G^k=g^k$, i.e., 
\begin{align}
A_{k}(g^{k}):=\Big\{\mathbf{y}_{k}^{T_{k}}\in \mathcal{Y}^{Tn}\big|\psi_{k}^*\big(g^{k-1},\mathbf{y}^{T_{k}}_{k}\big)\neq g_{k}\Big\}.
\end{align}
\newpage
Then, the error probability of the $k$-th message can be written as follows: 
\begin{align}
&\Pr(\hat{G}_{k}\ne G_{k})\cr
&= \sum_{g^{k}\in \mathcal{G}^k}\frac{1}{2^{knR}}\Pr\left(\mathbf{Y}_{k}^{T_{k}}\in A_k(g^{k})|G^{k}=g^{k}\right)\\
&=\sum_{g^{\min\{T_{k},S_n\}}\in \mathcal{G}^{\min\{T_{k},S_n\}}}\frac{1}{2^{\min\{T_{k},S_n\}nR}}\Pr\left(\mathbf{Y}_{k}^{T_{k}}\in A_k(g^{k})|G^{\min\{T_{k},S_n\}}=g^{\min\{T_{k},S_n\}}\right)\\
&=\sum_{g^{\min\{T_{k},S_n\}}\in \mathcal{G}^{\min\{T_{k},S_n\}}}\frac{1}{2^{\min\{T_{k},S_n\}nR}}\Pr\Big(\mathbf{Y}_{k}^{T_{k}}\in A_k(g^{k})|\mathbf{X}^{T_{k}}_{k}=\phi^{T_{k}}_{k}(g^{\min\{T_{k},S_n\}}),\cr
&\qquad\qquad\qquad\qquad\qquad\qquad\qquad\qquad\qquad\qquad\qquad G^{\min\{T_{k},S_n\}}=g^{\min\{T_{k},S_n\}}\Big)\\
&\overset{(a)}=\sum_{g^{\min\{T_{k},S_n\}}\in \mathcal{G}^{\min\{T_{k},S_n\}}}\frac{1}{2^{\min\{T_{k},S_n\}nR}}\Pr\left(\mathbf{Y}_{k}^{T_{k}}\in A_k(g^{k})|\mathbf{X}^{T_{k}}_{k}=\phi^{T_{k}}_{k}(g^{\min\{T_{k},S_n\}})\right)\\
&=\sum_{g^{\min\{T_{k},S_n\}}\in \mathcal{G}^{\min\{T_{k},S_n\}}}\frac{1}{2^{\min\{T_{k},S_n\}nR}}W^{Tn}(A_k(g^{k})|\phi^{T_{k}}_{k}(g^{\min\{T_{k},S_n\}})), \label{eqn:eq_err}
\end{align}
where $\phi^{T_{k}}_{k}(g^{\min\{T_{k},S_n\}}):=(\phi_j(g^{\min\{j,S_n\}}): j\in [k:T_k])$\footnote{We remind that $\phi_j$ denotes the encoding function for the $j$-th block. } and $(a)$ is due to the Markov chain $\mathbf{Y}^{T_{k}}_{k}-\mathbf{X}^{T_{k}}_{k}-G^{\min\{T_{k},S_n\}}$.

\subsection{Lower bounding the error probability under an auxiliary channel}\label{subsec:lb}
In this subsection, we lower bound the maximal error probability over the first $S_n^*$ messages under an auxiliary channel $V_n^*$, where\footnote{Note that $(1-3(t+\zeta))/4>0$.} 
\begin{align}
S_n^*&:=\min\big\{S_n,  \exp\{n^{(1-3(t+\zeta))/4}\}\big\}.
\end{align} 
Note that the decoder decodes a total of $S_n^*$ messages (a total of $S_n^*nR_n$ bits) in $T_{S_n^*}$ blocks ($T_{S_n^*}n$ channel uses), which yields an effective rate of $\frac{S_n^*nR_n}{T_{S_n^*}n}=R_n-\frac{(T-1)R_n}{T_{S_n^*}}=: R_n-\delta_n$. Because $S_n=\omega(n^t)$ and $\exp\{n^{(1-3(t+\zeta))/4}\}=\omega(n^t)\cap \exp\{o(n^{(1-3(t+\zeta))/2})\}$, it follows that
\begin{align}
S_n^*&=\omega(n^t) \cap  \exp\{o(n^{(1-3(t+\zeta))/2})\},
\end{align}
and, in turn, 
\begin{align}
\delta_n=o(n^{-t})\cap  \exp\{-o(n^{(1-3(t+\zeta))/2})\}. \label{eqn:delta_n}
\end{align}
Now, we choose the auxiliary channel $V_n^*$ that optimizes the Haroutunian error exponent at rate $R_n-2\delta_n$, i.e.,  
\begin{equation}
V_n^* := \argmin_{V : C(V) \le R_n- 2\delta_n } \max_{P\in \mathcal{P}(\mathcal{X})} D(V\| W| P).  \label{eqn:min_V}
\end{equation}

The following lemma gives a lower bound on the maximal error probability over $S_n^*$ messages under the auxiliary channel $V_n^*$ using the fact that the effective rate $R_n-\delta_n$ is strictly larger than the capacity $R_n-2\delta_n$. This lemma is proved at the end of this section. 
\begin{lemma}\label{lemma:lb} 
Assume that the streaming code with the sequence of decoding functions  $\psi_k^*$ for $k\in [1:S_n]$  is applied to the auxiliary channel $V_n^*$. Then, there exists $\delta_n'= \Theta\left(\frac{\delta_n}{-\log \delta_n}\right)$ such that 
\begin{equation}
\max_{k\in [1:S_n^*]} \Pr(\hatG_{k}\neq G_{k})  \ge \delta'_n.\label{eqn:max_error}
\end{equation}
\end{lemma}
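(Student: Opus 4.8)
The plan is to establish \eqref{eqn:max_error} by a weak-converse (Fano) argument exploiting the fact that jointly conveying the first $S_n^*$ messages over the $T_{S_n^*}$ blocks that affect their decoding amounts to an effective rate $R_n-\delta_n$ that is \emph{strictly above} the capacity $C(V_n^*)\le R_n-2\delta_n$ of the auxiliary channel. The feature that shapes the whole bookkeeping is that the naive Fano bound --- upper bounding each binary-entropy term by $1$ and summing over all $S_n^*=\omega(n^t)$ messages --- accumulates a slack of order $S_n^*$ that swamps the capacity surplus $(T-1)nR_n=\Theta(n)$ and gives nothing useful; the binary-entropy terms must be retained so that the accumulated slack is itself controlled by $\max_k\Pr(\hat G_k\neq G_k)$.

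Concretely, I would fix $k\in[1:S_n^*]$, set $P_{e,k}:=\Pr(\hat G_k\neq G_k)$ (evaluated under $V_n^*$), and note that since $\hat G_k=\psi_k^*(G^{k-1},\mathbf{Y}_k^{T_k})$ is a deterministic function of $(G^{k-1},\mathbf{Y}_k^{T_k})$ while $G_k$ is uniform on $\mathcal G$ and independent of $G^{k-1}$, Fano's inequality gives
\begin{align}
I(G_k;\mathbf{Y}_k^{T_k}\mid G^{k-1})\;\ge\;nR_n-\Hb(P_{e,k})-P_{e,k}\,nR_n .
\end{align}
Summing over $k\in[1:S_n^*]$, using $\mathbf{Y}_k^{T_k}\subseteq\mathbf{Y}^{T_{S_n^*}}$ together with the chain rule $\sum_k I(G_k;\mathbf{Y}^{T_{S_n^*}}\mid G^{k-1})=I(G^{S_n^*};\mathbf{Y}^{T_{S_n^*}})$, and then the data-processing inequality and the block-memorylessness of $V_n^*$ to get $I(G^{S_n^*};\mathbf{Y}^{T_{S_n^*}})\le T_{S_n^*}\,nC(V_n^*)\le T_{S_n^*}\,n(R_n-2\delta_n)$, I would arrive at
\begin{align}
S_n^* nR_n-\sum_{k=1}^{S_n^*}\Hb(P_{e,k})-nR_n\sum_{k=1}^{S_n^*}P_{e,k}\;\le\;T_{S_n^*}\,n(R_n-2\delta_n) .
\end{align}
Since $S_n^* nR_n=T_{S_n^*}n(R_n-\delta_n)$ and $T_{S_n^*}\delta_n=(T-1)R_n$, this collapses to $\sum_k\Hb(P_{e,k})+nR_n\sum_k P_{e,k}\ge(T-1)nR_n$; writing $\bar P_e:=\max_{k\in[1:S_n^*]}P_{e,k}$ --- where we may assume $\bar P_e\le\tfrac12$, as otherwise \eqref{eqn:max_error} is immediate because $\delta_n'\to0$ --- and using that $\Hb$ is increasing on $[0,\tfrac12]$, this yields $\Hb(\bar P_e)+nR_n\bar P_e\ge(T-1)nR_n/S_n^*\ge n\delta_n$.

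It then remains to solve this scalar inequality for $\bar P_e$. Using $\Hb(p)\le p\log(e/p)$ and that $p\mapsto p\big(\log\tfrac1p+\log e+nR_n\big)$ is increasing on $(0,1)$, suppose for contradiction that $\bar P_e\le\delta_n':=\delta_n/(-\log\delta_n)$; then $n\delta_n\le\delta_n'\big(\log\tfrac1{\delta_n'}+\log e+nR_n\big)$. The threshold in the definition of $S_n^*$ guarantees $-\log\delta_n=\Theta(\log S_n^*)=o(n)$, so $\log\tfrac1{\delta_n'}\le 2(-\log\delta_n)\le 2nR_n$ for $n$ large, whence $n\delta_n\le 4nR_n\,\delta_n/(-\log\delta_n)$, i.e. $-\log\delta_n\le 4C$ --- contradicting $\delta_n\to0$. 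Hence $\bar P_e>\delta_n'=\Theta\big(\delta_n/(-\log\delta_n)\big)$, which is \eqref{eqn:max_error}. I expect the genuinely delicate point to be precisely this last step: after retaining the binary-entropy slack one must solve the transcendental inequality while simultaneously checking that $\log(1/\delta_n')=o(n^{1-2t})$ --- the property needed so that propagating $\delta_n'$ through the subsequent change-of-measure argument does not perturb the moderate-deviations constant --- and it is exactly this requirement that forces the calibration of the threshold appearing in $S_n^*$.
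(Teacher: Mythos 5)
Your proof is correct, but it reaches Lemma~\ref{lemma:lb} by a genuinely different route than the paper. The paper never invokes Fano's inequality: it introduces the bit-wise error sequences $\tilde{\mathbf{B}}_k=\mathbf{B}_k\oplus\hat{\mathbf{B}}_k$, proves by induction (using the feedforward structure) that $\mathbf{B}^{S_n^*}$ is reconstructible from $(\tilde{\mathbf{B}}^{S_n^*},\mathbf{Y}^{T_{S_n^*}})$, deduces $H(\tilde{\mathbf{B}}^{S_n^*})\ge T_{S_n^*}n\delta_n$ from the capacity constraint of $V_n^*$, then averages over bit positions to exhibit one bit with entropy at least $\delta_n/(R_n-\delta_n)$ and inverts the binary entropy function to obtain $\delta_n'=\Theta(\delta_n/(-\log\delta_n))$, a bit error forcing the corresponding message error. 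You instead apply Fano conditionally on $G^{k-1}$ for each message (legitimate precisely because the feedforward decoder makes $\hat G_k$ a function of $(G^{k-1},\mathbf{Y}_k^{T_k})$, and $G_k$ is independent of $G^{k-1}$), sum via the chain rule and data processing to reach the same surplus inequality $\sum_k \Hb(P_{e,k})+nR_n\sum_k P_{e,k}\ge T_{S_n^*}n\delta_n$, and then solve the scalar inequality for the maximal error probability; your observation that the crude bound $\Hb\le 1$ is useless because $S_n^*$ may be superpolynomial in $n$ is exactly the right diagnosis, and your use of the cap in $S_n^*$ to guarantee $-\log\delta_n=o(n)$ in the inversion step is sound. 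Both arguments pivot on the same effective-rate-versus-capacity gap and both exploit the genie; yours trades the paper's reconstruction-by-induction argument for a careful retention of the $\Hb$ terms. A minor bonus of your route: since $-\log\delta_n=o(n)$ makes the $nR_n\bar P_e$ term dominant, your transcendental inequality actually yields the stronger conclusion $\max_k P_{e,k}=\Omega(\delta_n)$, so the $-\log\delta_n$ loss in the paper is an artifact of its per-bit inversion rather than a necessity — though the stated order $\Theta(\delta_n/(-\log\delta_n))$ is all that the downstream change-of-measure step requires, and, like the paper's proof, your bound holds for all sufficiently large $n$, which suffices.
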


Let $k^*$ denote the message index whose error probability is the same as the maximal error probability over $S_n^*$ messages under the auxiliary channel $V_n^*$. In the subsequent subsection, we use the following corollary of Lemma~\ref{lemma:lb}, which is proved at the end of this section.
\begin{corollary}\label{corollary:proportion}
For at least a $\frac{\delta'_n}{2}$ proportion of sequences $g^{\min\{T_{k^*},S_n\}}$ in $\mathcal{G}^{\min\{T_{k^*},S_n\}}$, it follows that 
\begin{align}
(V_n^*)^{Tn}\left(A_{k^*}(g^{k^*})|\phi^{T_{k^*}}_{k^*}(g^{\min\{T_{k^*},S_n\}})\right)\geq \frac{\delta'_n}{2}.
\end{align}
 
\end{corollary}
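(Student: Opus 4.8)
The plan is to deduce Corollary~\ref{corollary:proportion} from Lemma~\ref{lemma:lb} by an elementary averaging (reverse Markov) argument applied to the decomposition~\eqref{eqn:eq_err} of the error probability, with the true channel $W^{Tn}$ replaced throughout by the auxiliary channel $(V_n^*)^{Tn}$.

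First I would note that, by its definition, $k^*$ attains the maximum in~\eqref{eqn:max_error}, so Lemma~\ref{lemma:lb} yields $\Pr(\hat{G}_{k^*}\neq G_{k^*})\ge \delta'_n$ when the streaming code is run over $V_n^*$. Substituting $k=k^*$ into the $V_n^*$-analogue of~\eqref{eqn:eq_err}, and using $|\mathcal{G}^{\min\{T_{k^*},S_n\}}|=M_n^{\min\{T_{k^*},S_n\}}=2^{\min\{T_{k^*},S_n\}nR_n}$ so that the coefficients $2^{-\min\{T_{k^*},S_n\}nR_n}$ are exactly the uniform weights on $\mathcal{G}^{\min\{T_{k^*},S_n\}}$, this inequality becomes
\[
\frac{1}{|\mathcal{G}^{\min\{T_{k^*},S_n\}}|}\sum_{g^{\min\{T_{k^*},S_n\}}\in\mathcal{G}^{\min\{T_{k^*},S_n\}}} p\big(g^{\min\{T_{k^*},S_n\}}\big)\ge \delta'_n ,
\]
where I abbreviate $p(g^{\min\{T_{k^*},S_n\}}):=(V_n^*)^{Tn}\big(A_{k^*}(g^{k^*})\,\big|\,\phi^{T_{k^*}}_{k^*}(g^{\min\{T_{k^*},S_n\}})\big)\in[0,1]$.

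Then I would close the argument with the reverse Markov step: writing $q$ for the fraction of sequences $g^{\min\{T_{k^*},S_n\}}$ with $p(g^{\min\{T_{k^*},S_n\}})\ge \delta'_n/2$ and splitting the average above according to whether this event holds, the contribution of the favourable part is at most $q\cdot 1$ and that of the complement is at most $\delta'_n/2$, so $\delta'_n\le q+\delta'_n/2$, whence $q\ge \delta'_n/2$, which is precisely the assertion of Corollary~\ref{corollary:proportion}. There is no genuine obstacle here; the two points deserving care are (i) checking that the mixture coefficients in~\eqref{eqn:eq_err} really do form the uniform distribution over $\mathcal{G}^{\min\{T_{k^*},S_n\}}$, so that ``proportion of sequences $g^{\min\{T_{k^*},S_n\}}$'' is the correct reading of the averaging, and (ii) performing the averaging over exactly the index $g^{\min\{T_{k^*},S_n\}}$ — the tuple of messages that actually determines the codewords $\phi^{T_{k^*}}_{k^*}(\cdot)$ transmitted in blocks $k^*$ through $T_{k^*}$ — rather than over the full message sequence or over $g^{k^*}$. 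Since $\delta'_n\to 0$ we have $\delta'_n<1$ for all large $n$, so the event ``$p\ge \delta'_n/2$'' is non-vacuous.
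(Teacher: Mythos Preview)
Your proposal is correct and matches the paper's proof in essence: both start from Lemma~\ref{lemma:lb} applied at $k^*$, rewrite $\Pr(\hat G_{k^*}\ne G_{k^*})$ via the $V_n^*$-analogue of~\eqref{eqn:eq_err} as a uniform average of the $p(g^{\min\{T_{k^*},S_n\}})$'s, and then apply the same splitting/reverse-Markov step. The paper phrases this last step as a proof by contradiction rather than your direct inequality $\delta'_n\le q+\delta'_n/2$, but the arguments are identical.
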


\subsection{Change-of-measure} \label{subsec:cm}
Now, we lower bound the error probability of the $k^*$-th message under the true channel $W$ using the result in Corollary \ref{corollary:proportion}. To that end, we use the following lemma concerning a change-of-measure from the auxiliary channel $V_n^*$ to the true channel $W$. This lemma is particularly suited to moderate deviations analysis. The proof of this lemma is given in Appendix~\ref{appendix:cm}.
\begin{lemma}[Change-of-measure] \label{lemma:ch_m} If $(V_n^*)^{Tn}(A|x^{Tn})\geq \frac{\delta'_n}{2}$ for some $x^{Tn}\in \mathcal{X}^{Tn}$ and  $A\subseteq \mathcal{Y}^{Tn}$, the conditional probability under the true channel $W$ is lower-bounded as  
\begin{align}
W^{Tn}(A|x^{Tn})&\geq \frac{\delta'_n}{4}\exp\Big\{-Tn \Big( D( V_n^* \|W|P_{x^{Tn}}) + \eta_n\Big) \Big\} \label{eqn:cm} 
\end{align}
for some $\eta_n=o(n^{-2t})$. We note that the condition $t<\frac{1}{3}$ for the moderate deviations scaling is crucial in the derivation of \eqref{eqn:cm}.
\end{lemma}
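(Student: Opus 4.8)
The plan is to use the standard type-counting / change-of-measure argument but to track the polynomial prefactors and the deviation of the empirical conditional type from $V_n^*$ with enough precision that the resulting slack $\eta_n$ is genuinely $o(n^{-2t})$. First I would restrict attention to the set $\calT$ of output sequences $y^{Tn}\in A$ whose conditional type given $x^{Tn}$, call it $V_{y^{Tn}|x^{Tn}}$, is close to $V_n^*$ in the sense $\|V_{y^{Tn}|x^{Tn}}(\cdot|a)-V_n^*(\cdot|a)\|_1\le \rho_n$ for every $a\in\calX$ that appears in $x^{Tn}$, for a threshold $\rho_n$ to be chosen. By a Chernoff/Hoeffding bound on the conditional-type fluctuation under $(V_n^*)^{Tn}(\cdot|x^{Tn})$, the complement of this neighborhood carries probability at most $|\calX||\calY|\exp\{-c\, Tn\,\rho_n^2\}$; choosing $\rho_n$ so that $Tn\rho_n^2\to\infty$ while $\rho_n\to 0$ (e.g.\ $\rho_n=n^{-\beta}$ with $\beta<1/2$) makes this tail $\le \delta_n'/4$, so that $(V_n^*)^{Tn}(\calT\cap A|x^{Tn})\ge \delta_n'/4$. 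Here I would use that $\delta_n'$ decays only polynomially in $n$ (Lemma~\ref{lemma:lb}), so a stretched-exponentially small tail is negligible against it.

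Next I would pass from $V_n^*$ to $W$ on the set $\calT\cap A$. For any $y^{Tn}$ with conditional type $V'=V_{y^{Tn}|x^{Tn}}$ one has the exact identity $\frac{W^{Tn}(y^{Tn}|x^{Tn})}{(V_n^*)^{Tn}(y^{Tn}|x^{Tn})}=\exp\{-Tn\,(D(V'\|W|P_{x^{Tn}})-D(V'\|V_n^*|P_{x^{Tn}}))\}$. Summing this over $y^{Tn}\in\calT\cap A$ and bounding the exponent pointwise gives
\begin{align}
W^{Tn}(\calT\cap A|x^{Tn})
&\ge \exp\Big\{-Tn\max_{V'\in\calT}\big(D(V'\|W|P_{x^{Tn}})-D(V'\|V_n^*|P_{x^{Tn}})\big)\Big\}(V_n^*)^{Tn}(\calT\cap A|x^{Tn})\nn\\
&\ge \frac{\delta_n'}{4}\exp\Big\{-Tn\big(D(V_n^*\|W|P_{x^{Tn}})+\eta_n\big)\Big\},\nn
\end{align}
where $\eta_n:=\sup_{V':\|V'-V_n^*\|\le\rho_n}\big(D(V'\|W|P_{x^{Tn}})-D(V_n^*\|W|P_{x^{Tn}})\big)$ (the term $D(V'\|V_n^*|P_{x^{Tn}})\ge 0$ is simply dropped). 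Since $W^{Tn}(A|x^{Tn})\ge W^{Tn}(\calT\cap A|x^{Tn})$, this is the claimed bound modulo controlling $\eta_n$.

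The crux — and the step I expect to be the main obstacle — is showing $\eta_n=o(n^{-2t})$, and this is exactly where the hypothesis $t<\tfrac13$ enters. The difficulty is that $V_n^*$ is not fixed: as $n\to\infty$ it approaches the capacity region boundary at rate $\delta_n=o(n^{-t})$, so $V_n^*\to$ some $V^\dagger$ with $C(V^\dagger)=C$, and near such channels $D(V^\dagger\|W|P)$ can be small, which is fine, but the local Lipschitz/smoothness constant of $V'\mapsto D(V'\|W|P)$ one gets from a Taylor expansion depends on $\min_{a,b} W(b|a)$ and on how close $V_n^*$ gets to the boundary of the simplex. I would argue that $W$ is a fixed channel so $\min W(b|a)>0$ is a constant, hence on the $\rho_n$-neighborhood the map is Lipschitz with a constant uniform in $n$, giving $\eta_n=O(\rho_n)$. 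Then choosing $\rho_n=n^{-\beta}$ with $2t<\beta<\tfrac12$ — which is possible precisely because $t<\tfrac13$ forces $2t<\tfrac23<1$, and more to the point one needs a single $\beta$ with $2t<\beta<\tfrac12$, i.e.\ $2t<\tfrac12$, i.e.\ $t<\tfrac14$; the slightly larger range up to $\tfrac13$ is recovered by a sharper second-order expansion of the divergence, using that $\eta_n$ compares $V_n^*$ to its own $\rho_n$-neighborhood so the first-order term is itself of order $\rho_n\cdot\mathrm{dist}(V_n^*,\partial)$ rather than $\rho_n$ — yields simultaneously $Tn\rho_n^2=\omega(1)$ (tail control) and $\eta_n=o(n^{-2t})$. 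I would carry out this balancing explicitly, and verify that the same $\rho_n$ makes the earlier tail bound $|\calX||\calY|\exp\{-cTn\rho_n^2\}\le \delta_n'/4$ by comparing the stretched-exponential against the polynomial lower bound on $\delta_n'$ from Lemma~\ref{lemma:lb}.
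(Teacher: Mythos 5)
Your overall architecture (restrict to a conditional-typicality set under $V_n^*$, bound its probability so that it survives against $\delta_n'/2$, then change measure via the exact type identity and absorb the deviation into $\eta_n$) matches the paper's, but three steps do not hold as written. First, your tail bound $|\calX||\calY|\exp\{-c\,Tn\,\rho_n^2\}$ for the event that the conditional type is $\rho_n$-close to $V_n^*$ \emph{for every letter appearing in} $x^{Tn}$ is false for an arbitrary codeword: the Hoeffding exponent for letter $a$ is governed by $N_a(x^{Tn})$, not by $Tn$, and $N_a(x^{Tn})$ can be $O(1)$ (no constant-composition reduction is available here, since $x^{Tn}$ is an arbitrary concatenation of codewords). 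This is exactly why the paper's typical set $\calJ^{\gamma_1,\gamma_2}_{x^l}$ (Lemma~\ref{lemma:typical}) only constrains letters whose empirical frequency exceeds a threshold $\gamma_2$, and pays for the rare letters with the extra $2\gamma_2\gamma'$ term in the exponent of the likelihood ratio. Second, your balancing rests on the claim that $\delta_n'$ decays ``only polynomially,'' which is not what Lemma~\ref{lemma:lb} gives: since $S_n^*$ may be as large as $\exp\{n^{(1-3(t+\zeta))/4}\}$, $\delta_n'$ is stretched-exponentially small, and the comparison of the typicality tail against $\delta_n'$ is precisely where the paper's specific choice $\gamma_{1,n}=\gamma_{2,n}=n^{-(t+\zeta)}$ (giving tail exponent $n^{1-3(t+\zeta)}$ versus the $o(n^{(1-3(t+\zeta))/2})$ exponent of $\delta_n'$) is needed; with your premise the verification you promise would be carried out against the wrong target.

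Third, the crux you flag --- why $\eta_n=o(n^{-2t})$ for all $t<\tfrac13$ rather than $t<\tfrac14$ --- is left as an unproven remark about a ``sharper second-order expansion,'' and your Lipschitz argument additionally assumes $\min_{a,b}W(b|a)>0$, which is not given. The paper's mechanism is different and concrete: it shows (Lemma~\ref{lem:cont}) that $\gamma_n'=\sum_{(x,y):V_n^*(y|x)\neq 0}\bigl|\log\tfrac{V_n^*(y|x)}{W(y|x)}\bigr|=O(n^{-t})$, using output symmetry to identify $\max_x D(V_n^*(\cdot|x)\|W(\cdot|x))$ with $E_{\mathrm{SP}}(C-n^{-t}-2\delta_n)=O(n^{-2t})$ (Lemma~\ref{lemma:sp}), then Pinsker and a Taylor expansion; zero entries of $W$ are handled because finiteness of the divergence forces $V_n^*(y|x)=0$ there. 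The slack is then $\eta_n=(\gamma_{1,n}+2\gamma_{2,n})\gamma_n'=O(n^{-2t-\zeta})$, and the condition $t<\tfrac13$ enters not through your $2t<\beta<\tfrac12$ tension but through the requirement $1-3(t+\zeta)>0$ so that the cubic exponent $\gamma_{1,n}^2\gamma_{2,n}\,n$ of the frequency-thresholded typical set tends to infinity (and dominates the decay of $\delta_n'$). Your ``distance to the boundary'' intuition points in the same direction as Lemma~\ref{lem:cont}, but as it stands it is a gap, and note that it implicitly relies on the output-symmetry hypothesis (via the sphere-packing bound) that your sketch never invokes.
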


Now, we have 
\begin{align}
\Pr(\hat{G}_{k^*}\ne G_{k^*})
&\overset{(a)}=\sum_{g^{\min\{T_{k^*},S_n\}}}\frac{1}{2^{\min\{T_{k^*},S_n\}nR}}W^{Tn}(A_{k^*}(g^{k^*})|\phi^{T_{k^*}}_{k^*}(g^{\min\{T_{k^*},S_n\}}))\\
&\overset{(b)}\geq  \frac{(\delta'_n)^2}{8}\exp\Big\{-Tn \Big( \max_{P \in\calP(\calX)} D( V_n^* \|W|P) +  o(n^{-2t})\Big) \Big\}\\
&= \frac{(\delta'_n)^2}{8}\exp\Big\{-Tn \big(E^+(R_n-2\delta_n) + o(n^{-2t})\big) \Big\}\\
&\overset{(c)}= \frac{(\delta'_n)^2}{8}\exp\Big\{-Tn \big(E_{\mathrm{SP}}(R_n-2\delta_n) + o(n^{-2t})\big) \Big\}, \label{eqn:last_lb}
\end{align}
where $(a)$ is from \eqref{eqn:eq_err}, $(b)$ is due to Corollary~\ref{corollary:proportion} and Lemma~\ref{lemma:ch_m}, and $(c)$ is because $W$ is assumed to be output symmetric \cite{CsiszarKorner:11, Haroutunian:77}. 
Because $\delta_n'= \Theta\left(\frac{\delta_n}{-\log \delta_n}\right)$ and  $\delta_n$  satisfies \eqref{eqn:delta_n}, it follows that 
\begin{align}
\epsilon_n&\geq  \frac{\exp\{-o(n^{(1-3(t+\zeta))/2})\}}{o(n^{1-3(t+\zeta)})} \exp\Big\{-Tn \big(E_{\mathrm{SP}}(C-n^{-t}-o(n^{-t})) + o(n^{-2t})\big) \Big\}.
\end{align}
By taking the logarithm and normalizing by $-n^{1-2t}$, we obtain
\begin{align}
-\frac{1}{n^{1-2t}}\log \epsilon_n&\leq \frac{o(n^{(1-3(t+\zeta))/2}) +\log(o(n^{1-3(t+\zeta)}))}{n^{1-2t}}+Tn^{2t}\big(E_{\mathrm{SP}}(C-n^{-t}-o(n^{-t})) + o(n^{-2t})\big) .\label{eqn:loga}
\end{align}
To asymptotically bound the term involving $E_{\mathrm{SP}}(\cdot)$, we use the following lemma.
\begin{lemma}[{\cite[Proposition 1]{AltugWagner:14}}] \label{lemma:sp} When $\rho_n>0$ satisfies $\rho_n\rightarrow 0$ and $\rho_n\sqrt{n}\rightarrow \infty$, 
\begin{align}
\limsup_{n\rightarrow \infty} \frac{E_{\mathrm{SP}}(C-\rho_n)}{\rho_n^2}\leq \frac{1}{2\nu}.
\end{align}
\end{lemma}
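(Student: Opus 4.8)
The plan is to reduce Lemma~\ref{lemma:sp} to a one-sided quadratic estimate on Gallager's $E_0$ function. Recall the dual representation $E_{\mathrm{SP}}(R)=\sup_{\rho\geq 0}[E_0(\rho)-\rho R]$, where $E_0(\rho):=\max_{P\in\mathcal{P}(\mathcal{X})}E_0(\rho,P)$ and $E_0(\rho,P):=-\log\sum_{y\in\mathcal{Y}}\big(\sum_{x\in\mathcal{X}}P(x)W(y|x)^{1/(1+\rho)}\big)^{1+\rho}$; this follows from the classical identity $\min_{V:I(P,V)\leq R}D(V\|W|P)=\sup_{\rho\geq0}[E_0(\rho,P)-\rho R]$ \cite{Gallager:68,CsiszarKorner:11} upon interchanging $\max_P$ and $\sup_\rho$. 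I will use the standard facts \cite{Gallager:68} that $E_0$ is concave and nondecreasing with $E_0(0)=0$, $E_0'(0^{+})=C$, and $E_0(\rho)\leq\log|\mathcal{Y}|$; in particular $h(\rho):=E_0(\rho)-\rho C$ is concave and nonincreasing with $h(0)=0$, $h'(0^{+})=0$, so $h\leq 0$ on $[0,\infty)$ and $h(\rho)\leq\log|\mathcal{Y}|-\rho C$.

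The heart of the matter is the bound
\[
h(\rho)=E_0(\rho)-\rho C\;\leq\;-\tfrac{\nu}{2}\,\rho^{2}+o(\rho^{2})\qquad(\rho\downarrow 0).
\]
To establish it I would start from the Taylor expansion $E_0(\rho,P)=\rho\,I(P,W)-\tfrac{\rho^{2}}{2}\,\widetilde V(P)+O(\rho^{3})$, which is uniform over $P$ in a neighborhood of $\Pi$ by compactness and smoothness of $E_0$, and in which $P\mapsto\widetilde V(P)=-\,\partial_\rho^2 E_0(\rho,P)|_{\rho=0}$ is continuous and equals the unconditional information variance $U(P,W)$ (a standard computation), hence equals $V(P,W)$ for every $P\in\Pi$ by \cite[Lemma~62]{PolyanskiyPoorVerdu:10}. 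If $P_\rho$ attains $E_0(\rho)$, then $E_0(\rho)/\rho\to C$ forces $I(P_\rho,W)\to C$, so $\mathrm{dist}(P_\rho,\Pi)\to 0$; continuity of $\widetilde V$ together with $\nu=\min_{P\in\Pi}V(P,W)$ then gives $\widetilde V(P_\rho)\geq\nu-o(1)$. Plugging this and $I(P_\rho,W)\leq C$ into the expansion yields the displayed estimate.

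Granting the quadratic estimate, the lemma is a short Legendre-transform computation. Fix $\epsilon\in(0,\nu)$ and choose $\rho_0>0$ with $h(\rho)\leq-\tfrac{\nu-\epsilon}{2}\rho^{2}$ on $[0,\rho_0]$. Writing $R=C-\rho_n$ and splitting the supremum $E_{\mathrm{SP}}(C-\rho_n)=\sup_{\rho\geq0}[h(\rho)+\rho\,\rho_n]$ over $[0,\rho_0]$, over $[\rho_0,M]$ with $M:=2\log|\mathcal{Y}|/C$, and over $[M,\infty)$: on the first interval the bracket is $\leq-\tfrac{\nu-\epsilon}{2}\rho^{2}+\rho\rho_n\leq\tfrac{\rho_n^{2}}{2(\nu-\epsilon)}$; on the second, since $h$ is nonincreasing, it is $\leq h(\rho_0)+M\rho_n\to h(\rho_0)<0$; on the third it is $\leq\log|\mathcal{Y}|-\rho(C-\rho_n)\leq0$ for $n$ large (as $\rho_n\to0$ and $\rho\geq M$). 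Hence $E_{\mathrm{SP}}(C-\rho_n)\leq\tfrac{\rho_n^{2}}{2(\nu-\epsilon)}$ for all large $n$, so $\limsup_n E_{\mathrm{SP}}(C-\rho_n)/\rho_n^{2}\leq\tfrac{1}{2(\nu-\epsilon)}$, and letting $\epsilon\downarrow 0$ proves the lemma.

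I expect the quadratic estimate to be the only real obstacle: it requires the uniform cubic-order expansion of $E_0(\rho,P)$ and — more delicately — pinning down its quadratic coefficient on the set $\Pi$ as exactly the dispersion $\nu$, which is precisely where \cite[Lemma~62]{PolyanskiyPoorVerdu:10} is invoked; the remaining steps are soft analysis. Note that only $\rho_n\to0$ is used in this direction; the extra hypothesis $\rho_n\sqrt{n}\to\infty$ is what the matching lower bound on $E_{\mathrm{SP}}$ in \cite[Proposition~1]{AltugWagner:14} needs, and is what ties the estimate to the moderate-deviations scaling employed elsewhere in the paper.
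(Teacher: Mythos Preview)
The paper does not prove Lemma~\ref{lemma:sp}; it simply quotes it from \cite[Proposition~1]{AltugWagner:14} and uses the conclusion. So there is nothing in the paper to compare your argument against.

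That said, your sketch is a correct and self-contained proof of the stated inequality. Your identification $\tilde V(P)=-\partial_\rho^2 E_0(\rho,P)|_{\rho=0}=U(P,W)$ is in fact exact for \emph{every} $P$ (a direct computation gives $F''(0)=\mathbb{E}[i(X;Y)^2]$ in your notation, whence $E_0''(0,P)=-U(P,W)$), so your continuity-and-compactness step goes through cleanly; invoking \cite[Lemma~62]{PolyanskiyPoorVerdu:10} is then only needed to match $U$ with the dispersion $\nu=\min_{P\in\Pi}V(P,W)$. The three-interval Legendre split is airtight once you note that $\nu>0$ forces $h(\rho_0)<0$ strictly, which you implicitly use. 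Your closing remark is also correct: only $\rho_n\to 0$ is used here, and the hypothesis $\rho_n\sqrt n\to\infty$ is irrelevant for this upper bound on $E_{\mathrm{SP}}$ (it enters in \cite{AltugWagner:14} through the error-probability analysis, not the exponent estimate itself).

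For context, the original argument in \cite{AltugWagner:14} works on the primal side: for each $P$ they exhibit an explicit tilted channel $V$ with $I(P,V)\le R$ and $D(V\|W|P)\le (C-R)^2/(2V(P,W))+o((C-R)^2)$, and then restrict attention to $P$ near $\Pi$. Your dual route via the quadratic expansion of $E_0$ reaches the same endpoint with less explicit construction; both are standard.
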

Now, by taking limit superior to both sides of \eqref{eqn:loga} and applying Lemma~\ref{lemma:sp}, we obtain
\begin{align}
\limsup_{n\rightarrow \infty}-\frac{1}{n^{1-2t}}\log \epsilon_n&\leq \frac{T}{2\nu},
\end{align}
which completes the proof.
\end{proof}
\begin{remark}\label{remark:novelty}
The main flow of our converse proof is similar with that in \cite[Section IV]{Sahai:08} which is for a bit-wise streaming setup in the large deviations regime. In the following, the main technical novelty in our converse proof is summarized. 
\begin{itemize}
\item In \cite[Section IV]{Sahai:08}, the term corresponding to $S_n^*$ is a constant independent of $n$ and thus the resultant terms corresponding to $\delta_n$ and $\delta_n'$ are also constants.\footnote{We note that $\delta_n$ and $\delta_n'$ can be determined from $S_n^*$. } In our proof, $S_n^*$ is chosen carefully to simultaneously ensure that (i) the backoff from capacity is not affected by the subtraction of $2\delta_n$  (in e.g., \eqref{eqn:last_lb}), (ii) the moderate deviations constant is not affected by the multiplicative term $\frac{\delta_n'^2}{8}$ (in e.g., \eqref{eqn:last_lb}), and (iii) in the proof of the change-of-measure lemma, the speed of  convergence of the probability of a typical set to unity is asymptotically higher than the speed of decay of $\delta_n'$  (i.e., to derive Eq. \eqref{eqn:pb_joint}).

\item In the change-of-measure lemma \cite[Lemma 4.4]{Sahai:08}, the remainder term  corresponding to $\eta_n$ in Lemma~\ref{lemma:ch_m} is a constant independent of $n$. In the moderate deviations regime, the error probability decays subexponentially and hence it should be proved that the remainder term decays to zero sufficiently fast so that it does not affect the moderate deviations constant. In Lemma~\ref{lemma:ch_m}, which corresponds to a change-of-measure lemma suited to moderate deviations analysis, the speed of decay of $\eta_n$ is asymptotically bounded by carefully choosing the parameters of a typical set and characterizing the asymptotic behavior of the ratio of $V_n^*$ to $W$ (i.e., Lemma~\ref{lem:cont}). 

\item The work \cite[Section IV]{Sahai:08} analyzes the bit-wise error under the bit-wise encoding and decoding operations. In the proof of Lemma~\ref{lemma:lb}, we develop proof techniques adapted to the message-wise error under the block-wise operations. 
\end{itemize}
\end{remark}
\begin{remark}
We note that the condition $t<\frac{1}{3}$ for the moderate deviations scaling  in Proposition \ref{prop:converse} is not needed in the proof steps preceding Lemma~\ref{lemma:ch_m}. In the proof of Lemma~\ref{lemma:ch_m}, we need the condition $t<\frac{1}{3}$ to make the parameters of a typical set simultaneously satisfy that (i) the probability of typical set converges to unity as the length of the sequences increases and (ii) the remainder term $\eta_n$ is $o(n^{-2t})$. 
\end{remark} 

\begin{proof}[Proof of Lemma~\ref{lemma:ff_recent}]
The proof is immediate from the following Markov chain: 
\begin{align}
\mathbf{Y}^{k-1}-(G^{k-1},\mathbf{Y}_{k}^{T_k})-G_k,\label{eqn:markov}
\end{align}
which holds due to the causal nature of the encoder and the memoryless nature of the channel.

To prove explicitly, let $\mathcal{\psi}_{k,\mathrm{map}}^{f}$ be the maximum a posteriori probability (MAP) decoding function for message $G_k$ based on feedforward information $G^{k-1}$ and channel output sequences $\mathbf{Y}^{T_k}$. Then, we have
\begin{align}
\mathcal{\psi}_{k,\mathrm{map}}^{f}(g^{k-1},\mathbf{y}^{T_k})&=\argmax_{g_k}\Pr(G_k=g_k|g^{k-1},\mathbf{y}^{T_k})  \\
 &\overset{(a)}=\argmax_{g_k}\Pr(G_k=g_k|g^{k-1},\mathbf{y}^{T_k}_k) ,
\end{align}
where $(a)$ is due to the Markov chain in \eqref{eqn:markov}. 

Now, let us define $\psi_k^*$ for $k\in [1:S_n]$ as follows: 
\begin{align}
\psi_k^*(g^{k-1}, \mathbf{y}^{T_k}_k)=\argmax_{g_k}\Pr(G_k=g_k|g^{k-1},\mathbf{y}^{T_k}_k),
\end{align}
which achieves the same performance as $\mathcal{\psi}_{k,\mathrm{map}}^{f}$. Because MAP decoding is optimal, the probability of error using $\psi_k^*$ is the same as or less than that using any feedforward decoding function $\psi_k^f$. 
\end{proof}

\begin{proof}[Proof of Lemma~\ref{lemma:lb}]
In this proof, all the entropy and mutual information terms and probabilities are evaluated under the input distribution induced by the assumed streaming code with the sequence of decoding functions  $\psi_k^*$ for $k\in [1:S_n]$  and the auxiliary channel $V_n^*$. Let $\hat{G}_k$ for $k\in[1:S_n]$ denote the output of $\psi_k^*$. Let us denote  the binary representations of $G_k$ and $\hat{G}_k$ for $k\in [1:S_n]$ by $\mathbf{B}_k=B_{(k-1)nR_n+1}^{knR_n}$ and $\hat{\mathbf{B}}_k=\hat{B}_{(k-1)nR_n+1}^{knR_n}$, respectively, where the $B_i$'s and $\hat{B}_i$'s are binary, i.e., in $\{0,1\}$.  Let $\tilde{\mathbf{B}}_k:=\mathbf{B}_k \oplus \hat{\mathbf{B}}_k $ denote the error sequence of the $k$-th message. 
Then, we have
\begin{align}
S_n^*nR_n&= H(\mathbf{B}^{S_n^*})\\
&=I(\mathbf{B}^{S_n^*};\mathbf{B}^{S_n^*})\\
&\leq I(\mathbf{B}^{S_n^*};\mathbf{B}^{S_n^*},\tilde{\mathbf{B}}^{S_n^*},\mathbf{Y}^{T_{S_n^*}})\\
&\overset{(a)}=I(\mathbf{B}^{S_n^*};\tilde{\mathbf{B}}^{S_n^*},\mathbf{Y}^{T_{S_n^*}})\\
&=I(\mathbf{B}^{S_n^*};\mathbf{Y}^{T_{S_n^*}})+I(\mathbf{B}^{S_n^*};\tilde{\mathbf{B}}^{S_n^*}|\mathbf{Y}^{T_{S_n^*}})\\
&\leq I(\mathbf{B}^{S_n^*};\mathbf{Y}^{T_{S_n^*}})+H(\tilde{\mathbf{B}}^{S_n^*})\\
&\leq I(\mathbf{B}^{S_n^*},\mathbf{X}^{T_{S_n^*}};\mathbf{Y}^{T_{S_n^*}})+H(\tilde{\mathbf{B}}^{S_n^*})\\
&\overset{(b)}= I(\mathbf{X}^{T_{S_n^*}};\mathbf{Y}^{T_{S_n^*}})+H(\tilde{\mathbf{B}}^{S_n^*}), \label{eqn:block}
\end{align}
where $(a)$ is because $\mathbf{B}^{S_n^*}$ can be reconstructed from $(\tilde{\mathbf{B}}^{S_n^*},\mathbf{Y}^{T_{S_n^*}})$, which is proved at the end of this proof,  and $(b)$ is due to the memoryless nature of the channel.

Since the capacity of $V_n^*$ does not exceed $R_n-2\delta_n$, it follows that 
\begin{align}
I(\mathbf{X}^{T_{S_n^*}};\mathbf{Y}^{T_{S_n^*}})\leq T_{S_n^*}n(R_n-2\delta_n). \label{eqn:cap_cond}
\end{align}
By combining \eqref{eqn:block} and \eqref{eqn:cap_cond}, we obtain 
\begin{align}
H(\tilde{\mathbf{B}}^{S_n^*})&\geq S_n^*nR_n-T_{S_n^*}n(R_n-2\delta_n) \\
&\overset{(a)}=T_{S_n^*}n\delta_n
\end{align}
where $(a)$ is because $S_n^*nR_n=T_{S_n^*}n(R_n-\delta_n)$.

Since the average of the marginal entropy terms satisfies 
\begin{align}
\frac{1}{T_{S_n^*}n(R_n-\delta_n)}\sum_{i=1}^{T_{S_n^*}n(R_n-\delta_n)}H(\tilde{B}_i)\geq \frac{\delta_n}{R_n-\delta_n},
\end{align}
there exists $i'\in [1:S_n^*nR_n]$ such that 
\begin{align}
H(\tilde{B}_{i'})\geq  \frac{\delta_n}{R_n-\delta_n}. 
\end{align}
Then, by the fact that the binary entropy function satisfies $\lim_{p\to 0} \frac{h(p)}{ -p\log p} = 1$,  there exists $\delta'_n = \Theta\left(\frac{\delta_n}{-\log \delta_n}\right)$ such that 
\begin{equation}
\Pr(\tilB_{i'}=1) =\Pr(\hatB_{i'}\ne B_{i'}) \ge \delta'_n.\label{eqn:perr_error_seq}
\end{equation}
This in turn implies that there exists $k'\in [1:S_n^*]$ such that 
\begin{equation}
\Pr(\hatG_{k'}\ne G_{k'}) \ge \delta'_n, 
\end{equation}
and hence 
\begin{equation}
\max_{k\in[1:S_n^*]}\Pr(\hatG_{k}\ne G_{k}) \ge \delta'_n.
\end{equation}

Now, it remains to show that $\mathbf{B}^{S_n^*}$ can be reconstructed from $(\tilde{\mathbf{B}}^{S_n^*},\mathbf{Y}^{T_{S_n^*}})$. To that end, let us show that there exists  a sequence of  functions $f_k: \{0,1\}^{knR} \times \mathcal{Y}^{T_kn} \rightarrow \{0,1\}^{knR} $ for $k\in [1:S_n]$ such  that  
\begin{align}
f_k(\tilde{\mathbf{B}}^{k}, \mathbf{Y}^{T_k}) = \mathbf{B}^{k}.\label{eqn:error}
\end{align} 
This can be proved by using induction. For $k=1$, assume that $(\tilde{\mathbf{B}}_1, \mathbf{Y}^{T_1})$ is given. Then, $\hat{\mathbf{B}}_1$ can be obtained by representing $\psi_1^*(\mathbf{Y}^{T_1})$ in binary, and in turn, $\mathbf{B}_1$ can be reconstructed by XOR-ing $\hat{\mathbf{B}}_1$ with $\tilde{\mathbf{B}}_1$. Hence, there exists $f_1$ that satisfies \eqref{eqn:error} for $k=1$. 

Now, fix $k\geq 2$ and assume that $(\tilde{\mathbf{B}}^{k}, \mathbf{Y}^{T_{k}})$ is given. Assume that there exists $f_{k-1}$  such that
\begin{align}
f_{k-1}(\tilde{\mathbf{B}}^{k-1}, \mathbf{Y}^{T_{k-1}})= \mathbf{B}^{k-1}.\label{eqn:err_induct}
\end{align}
Then, $\mathbf{B}^{k-1}$ and $G^{k-1}$ can be obtained from $f_{k-1}$. Furthermore, $\hat{\mathbf{B}}_k$ can be obtained by representing $\psi^*_k(G^{k-1},\mathbf{Y}_k^{T_k})$ in binary, and thus $\mathbf{B}_k$ can be reconstructed by XOR-ing $\hat{\mathbf{B}}_k$ with $\tilde{\mathbf{B}}_k$. Hence, $\mathbf{B}^{k}$ can be obtained from  $(\tilde{\mathbf{B}}^{k}, \mathbf{Y}^{T_{k}})$ so there exists $f_k$ that satisfies \eqref{eqn:error}. 
\end{proof} 

\begin{proof}[Proof of Corollary \ref{corollary:proportion}]
Let $\mu:=\min\{T_{k^*},S_n\}$ for notational convenience. We have 
\begin{align}
\delta'_n&\overset{(a)}\leq \Pr(\hat{G}_{k^*}\ne G_{k^*})\\
&\overset{(b)}=\sum_{g^{\mu}\in\mathcal{G}^{\mu}}\frac{1}{2^{\mu nR}}(V_n^*)^{Tn}\left(A_{k^*}(g^{k^*})|\phi^{T_{k^*}}_{k^*}(g^{\mu})\right),
\end{align}
where $(a)$ is due to Lemma~\ref{lemma:lb} and $(b)$ is from the same chain of equalities used to obtain \eqref{eqn:eq_err} by assuming the auxiliary channel $V_n^*$ instead of $W$. 
Now, let us assume, to the contrary, that for strictly less than $\frac{\delta'_n}{2}$ proportion of sequences $g^{\mu}$ in $\mathcal{G}^{\mu}$, the conditional probability $(V_n^*)^{Tn}\left(A_{k^*}(g^{k^*})|\phi^{T_{k^*}}_{k^*}(g^{\mu})\right)$ is at least $\frac{\delta'_n}{2}$. Let $B$ denote this subset of $\mathcal{G}^{\mu}$. Then,  we have
\begin{align}
&\sum_{g^{\mu}\in\mathcal{G}^{\mu}}\frac{1}{2^{\mu nR}}
(V_n^*)^{Tn}\left(A_{k^*}(g^{k^*})|\phi^{T_{k^*}}_{k^*}(g^{\mu})\right)\cr
&= \sum_{g^{\mu}\in B}\frac{1}{2^{\mu nR}}
(V_n^*)^{Tn}\left(A_{k^*}(g^{k^*})|\phi^{T_{k^*}}_{k^*}(g^{\mu})\right)
+\sum_{g^{\mu}\in B^c}\frac{1}{2^{\mu nR}}
(V_n^*)^{Tn}\left(A_{k^*}(g^{k^*})|\phi^{T_{k^*}}_{k^*}(g^{\mu})\right)\\
&\overset{(a)}\leq \frac{1}{2^{\mu nR}}\cdot |B|\cdot 1+ \frac{1}{2^{\mu nR}}\cdot|B^c| \cdot \frac{\delta'_n}{2}\\
&\overset{(b)}<  \frac{\delta'_n}{2} +  \frac{\delta'_n}{2}\\
&=\delta_n',
\end{align}
where $(a)$ is obtained by upper bounding the conditional probabilities by 1 and $\delta_n'/2$ for the sequences in $B$ and $B^c$, respectively, and $(b)$ is because $|B|<\frac{\delta_n'}{2}\cdot |\mathcal{G}^{\mu}|= \frac{\delta_n'}{2}\cdot 2^{\mu nR}$ and $|B^c|\leq  |\mathcal{G}^{\mu}|=   2^{\mu nR}$. This is a contradiction and hence the proof is completed. 
\end{proof}

\section{Achievability} \label{sec:achievability}
\begin{proof}[Proof of Proposition \ref{prop:achievability}]
Consider a DMC $(\mathcal{X},\mathcal{Y}, \{W(y|x): x\in \mathcal{X}, y\in \mathcal{Y}\} )$ with $\nu>0$. We show that there exists a sequence of $(n,M_n,\epsilon_n,T,S_n)$-streaming codes such that  $\log M_n=nC-n^{1-t}$, $S_n=\exp\{o(n^{1-2t})\}$, and $\epsilon_n$ satisfies \eqref{eqn:MD_achiev} for $0<t<\frac{1}{2}$. 
The encoding and decoding procedures are the same as those in \cite[Section IV-A]{LeeTanKhisti:arxiv15}, which are summarized in the following for the sake of completeness. We borrow some of the steps in the error analysis from \cite[Section IV-A]{LeeTanKhisti:arxiv15} as well, but the main difference is in considering the maximal error probability rather than the average error probability. 

Let $P_X$ denote an input distribution that achieves the dispersion. For the sake of symmetry in describing the encoding and decoding procedures, in addition to the sequence of messages $\{G_k\}_{k\in [1:S_n]}$, we introduce a sequence of auxiliary messages $\{G_k\}_{k\in [S_n+1:T_{S_n}]}$ each distributed uniformly over $\mathcal{G}$ that do not need to be decoded. 

\subsubsection{Encoding} 
For each $k\in [1:T_{S_n}]$ and $g^{k} \in \mathcal{G}^{k}$, generate $\mathbf{x}_k(g^{k})$ in an i.i.d. manner according to $P_X$. The generated codewords constitute the codebook $\mathcal{C}_n$. In block $k$, after observing the message sequence $G^{k}$, the encoder sends $\mathbf{x}_k(G^{k})$. 

\subsubsection{Decoding} 
Consider the decoding of $G_k$ at the end of block $T_k$ for $k\in [1:S_n]$. The decoder not only decodes $G_k$, but also re-decodes $G_1,\cdots, G_{k-1}$ at the end of block $T_k$. Let $\hat{G}_{T_k,j}$ denote the estimate of $G_j$ at the end of block $T_k$. The decoder decodes $G_j$ sequentially from $j=1$ to $j=k$ as follows: 
\begin{itemize}
\item Given $\hat{G}_{T_k,[1:j-1]}$, the decoder chooses $\hat{G}_{T_k,j}$ according to the following rule.\footnote{When $j=1$, $\hat{G}^{j-1}_{T_k}$ is null. } If there is a unique index $g_j\in \mathcal{G}$ that satisfies\footnote{The following notation is used for the set of codewords. Let $\mathcal{K}_j$ for $j\in \bbN$ denote the set of message indices mapped to the $j$-th codeword according to the encoding procedure. For $\mathcal{J}\subseteq \bbN$ and $\mathcal{K}\supseteq \bigcup_{j\in \mathcal{J}} \mathcal{K}_j$, we denote by  $\mathbf{x}_{\mathcal{J}}(g_{\mathcal{K}})$ the set of codewords $\{\mathbf{x}_j(g_{\mathcal{K}_j}): j\in \mathcal{J}\}$. } 
\begin{align}
i(\mathbf{x}_{[j:T_k]}(\hat{G}_{T_k, [1:j-1]}, g_{j}^{T_k}), \mathbf{y}_{[j:T_k]})&> (T_k-j+1) \cdot \log M_n \label{eqn:dec_rule}
\end{align}
for some $g_{j+1}^{T_k}$, let $\hat{G}_{T_k,j}=g_j$. 
If there is none or more than one such $g_j$, let $\hat{G}_{T_k,j}=1$. 
\item If $j<k$, repeat the above procedure by increasing $j$ to $j+1$. If $j=k$, the decoding procedure terminates and the decoder declares that the $k$-th message is $\hat{G}_{k}:=\hat{G}_{T_k,k}$.
\end{itemize}

\subsubsection{Error analysis} The aforementioned encoding and decoding procedures are the same as in \cite[Section IV-A]{LeeTanKhisti:arxiv15}. Hence, due to the same analysis used to derive \cite[Eq. (30)]{LeeTanKhisti:arxiv15}, it follows that for arbitrary $0<\lambda<1$, 
\begin{align}
\E_{\mathcal{C}_n}[\Pr(\hat{G}_k\neq G_k|\mathcal{C}_n)]\leq \frac{\exp\left\{-Tn^{1-2t}\lambda^2\left(\frac{1 }{2\nu}-\lambda n^{-t} \tau\right)\right\}}{1-\exp\{-n^{1-2t}\lambda^2\left(\frac{1 }{2\nu}-\lambda n^{-t} \tau\right)\}}+\frac{\exp\left\{-Tn^{1-t}(1-\lambda)\right\}}{1-\exp\left\{-n^{1-t}(1-\lambda)\right\}} \label{eqn:MD_sumup}
\end{align}
for sufficiently large $n$, where $\tau$ is some non-negative constant dependent only on the input distribution $P_X$ and the channel $W$. Then, we have 
\begin{align}
\E_{\mathcal{C}_n}\Big[\max_{k\in[1:S_n]} \Pr(\hat{G}_k\neq G_k|\mathcal{C}_n)\Big]&\leq \E_{\mathcal{C}_n}\Big[\sum_{k\in[1:S_n]} \Pr(\hat{G}_k\neq G_k|\mathcal{C}_n)\Big]\\
&\leq S_n\frac{\exp\left\{-Tn^{1-2t}\lambda^2\left(\frac{1 }{2\nu}-\lambda n^{-t} \tau\right)\right\}}{1-\exp\{-n^{1-2t}\lambda^2\left(\frac{1 }{2\nu}-\lambda n^{-t} \tau\right)\}}+S_n\frac{\exp\left\{-Tn^{1-t}(1-\lambda)\right\}}{1-\exp\left\{-n^{1-t}(1-\lambda)\right\}} 
\end{align}
for sufficiently large $n$. Because $S_n=\exp\{o(n^{1-2t})\}$ , we obtain
\begin{align}
\liminf_{n\rightarrow \infty} -\frac{1}{n^{1-2t} }\log \E_{\mathcal{C}_n}\left[\max_{k\in[1:S_n]} \Pr(\hat{G}_k\neq G_k|\mathcal{C}_n)\right]\geq \frac{T\lambda^2}{2\nu}.
\end{align}
By taking $\lambda\rightarrow 1$, we have 
\begin{align}
\liminf_{n\rightarrow \infty} -\frac{1}{n^{1-2t} }\log \E_{\mathcal{C}_n}\left[\max_{k\in[1:S_n]} \Pr(\hat{G}_k\neq G_k|\mathcal{C}_n)\right]\geq \frac{T}{2\nu}.
\end{align}
Hence, there must exist a sequence of codes $\mathcal{C}_n$ that satisfies  \eqref{eqn:MD_achiev}, which completes the proof. 
\end{proof}

\section{Conclusion} \label{sec:conclusion}
In this paper, we studied the moderate deviation asymptotics  for a streaming setup with a decoding delay of $T$ blocks. We showed that  the moderate deviations constant for output symmetric channels improves over the block coding or non-streaming setup exactly by a factor of $T$ for a certain range of moderate deviations scalings under some mild conditions on the number of streaming messages. 

We note that our converse result holds only for output symmetric channels because the Haroutunian exponent is the same as the sphere packing exponent for such channels.
The output symmetry of the channel would not be necessary if the Haroutunian exponent behaves as the sphere packing exponent in the moderate deviations regime for general DMCs, i.e., $E^+(C-n^{-t})\approx \frac{n^{-2t}}{2\nu}$ (compare to Lemma~\ref{lemma:sp}), which does not seem obvious since the Haroutunian exponent is strictly greater than the sphere packing exponent for some asymmetric channels. 
Alternatively, one could attempt to derive the sphere-packing bound directly as in \cite{AltugWagner:14} for block channel coding. In this approach, we first assume by using some standard arguments that the channel input sequences over $T_{S_n^*}$ blocks are constant composition, say type $P$, and then choose the auxiliary channel $V_n^*$ as follows instead of \eqref{eqn:min_V}:\footnote{We remind that $R_n=C-n^{-t}$ and $\delta_n=o(n^{-t})$.} 
\begin{align}
V_n^*=\argmin_{V: I(P,V)\leq R_n-2\delta_n} D(V\|W|P).
\end{align}
Then, due to similar arguments as in Section \ref{subsec:lb}, there exists a message index $k^*$ whose error probability is at least $\delta_n'=\Theta(\frac{\delta_n}{-\log \delta_n})$ under the channel $V_n^*$. 
As also remarked in \cite[Section IV-D]{Sahai:08}, the problem of this approach arises in the change-of-measure step since the dominant type of the channel input sequences in the $T$ blocks from the $k^*$-th block, i.e., block $k^*$ to block $T_{k^*}$, may not be the same as $P$.

Finally, we discuss whether it is possible to generalize the techniques herein to the case where the channel is Gaussian and there is a peak (almost sure) power constraint on the codewords. Close inspection of the upper bound~\eqref{eqn:MD_sumup} on the error probability in the achievability proof  together with a standard change-of-measure technique (e.g.,~\cite{MolavianJaziLaneman:15}), allows us to conclude that the achievability bound in Proposition~\ref{prop:achievability} continues to hold with $\nu = \frac{P(P+2)\log^2 e}{2(P+1)^2}$ (assuming we use bits as the units of information), where $P$ is the peak power of the codewords. However, the converse is not straightforward as the proof in Section~\ref{sec:converse} hinges on the use of the method of types and an analogue of strong typicality (cf.\ Lemma~\ref{lemma:typical}). These tools are more suited to channels with finite alphabets and cannot be easily adapted to channels with uncountable alphabets such as Gaussian channels. Thus, it appears that some new techniques are required to establish the analogue of Proposition~\ref{prop:converse} for Gaussian channels. However, we note that the converse proof herein uses several analytical tools that are used to analyze DMCs with feedback (e.g., the Haroutunian exponent). For Gaussian channels under the peak power constraint, there are some recent works~\cite{PolyanskiyPoorVerdu:11, FongTan:15}  that show that feedback does not improve the second- and third-order performance. Thus, the analytical tools in these recent works may pave a way to establish a result similar to that in Proposition~\ref{prop:converse}.

\appendices

\section{Change-of-measure}\label{appendix:cm}
\begin{proof}[Proof of Lemma~\ref{lemma:ch_m}]
To prove Lemma~\ref{lemma:ch_m}, let us first present two lemmas. First, the following lemma is a refined version of the typical set lemma \cite[Lemma II.1]{Sahai:08} and is proved at the end of this appendix.

\begin{lemma} \label{lemma:typical}
Consider a DMC $(\mathcal{X}, \mathcal{Y}, \{V(y|x): x\in \mathcal{X}, y\in \mathcal{Y}\})$, a vector $x^l\in \mathcal{X}^l$ of length $l$, $\gamma_{1}>0$ and $\gamma_{2}>0$. 
The following holds: 
\begin{align}
V^l(\calJ^{\gamma_{1},\gamma_{2}}_{x^l}|x^l)\geq 1-2|\mathcal{X}||\mathcal{Y}|\exp\{-2\gamma_{1}^2\gamma_{2}l\}, \label{eqn:typical}
\end{align}
where  the  typical set $\calJ^{\gamma_{1},\gamma_{2}}_{x^l}$ is defined as follows:
\begin{align}
\calJ^{\gamma_{1},\gamma_{2}}_{x^l}:=&\Big\{y^l\in \mathcal{Y}^l\Big| \mbox{ for every } x\in \mathcal{X}  \mbox{ such that } \frac{N_x(x^l)}{l}\geq \gamma_2 \mbox{ and every } y\in \mathcal{Y}, \cr
&\qquad\qquad\qquad \qquad\qquad\qquad ~ \Big|\frac{N_{x,y}(x^l,y^l)}{N_x(x^l)}-V(y|x)\Big|<\gamma_1 \Big\}.
\end{align}

Furthermore, for any $y^l \in \calJ_{x^l}^{\gamma_1,\gamma_2}$ such that $V^l(y^l|x^l)\neq 0$,
 \begin{equation}
\frac{W^l(y^l|x^l)}{V^l(y^l | x^l)}\ge \exp\Big\{-l\big( D(V\|W|P_{x^l}) +(\gamma_1+2 \gamma_2)\gamma'\big)\Big\} , \label{eqn:tilting}
 \end{equation}
where $\gamma':=\sum_{(x,y): V(y|x)\ne 0  } \left| \log\frac{V(y|x)}{W(y|x)} \right|$. 
\end{lemma}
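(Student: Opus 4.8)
The plan is to establish the two displayed inequalities of Lemma~\ref{lemma:typical} in turn, the first by a Hoeffding‑type concentration bound and the second by a direct computation of the likelihood ratio in terms of joint type counts. For \eqref{eqn:typical} I would fix $x^l$ and argue conditionally on the input being $x^l$. For each $x\in\calX$ with $N_x(x^l)\geq\gamma_2 l$, the channel outputs $Y_i$ at the $N_x(x^l)$ coordinates $i$ with $x_i=x$ are i.i.d.\ with law $V(\cdot\mid x)$, so $N_{x,y}(x^l,Y^l)/N_x(x^l)$ is the empirical mean of $N_x(x^l)$ i.i.d.\ $\{0,1\}$-valued variables with mean $V(y\mid x)$. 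Hoeffding's inequality then gives $\Pr\big(|N_{x,y}(x^l,Y^l)/N_x(x^l)-V(y\mid x)|\geq\gamma_1\big)\leq 2\exp\{-2\gamma_1^2 N_x(x^l)\}\leq 2\exp\{-2\gamma_1^2\gamma_2 l\}$, using $N_x(x^l)\geq\gamma_2 l$ in the last step. A union bound over the at most $|\calX||\calY|$ pairs $(x,y)$ with $N_x(x^l)\geq\gamma_2 l$ bounds $\Pr(Y^l\notin\calJ^{\gamma_1,\gamma_2}_{x^l})$ by $2|\calX||\calY|\exp\{-2\gamma_1^2\gamma_2 l\}$, which is \eqref{eqn:typical}; if no symbol is that frequent, or the right‑hand side of \eqref{eqn:typical} is non‑positive, the claim is trivial.

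For \eqref{eqn:tilting}, fix $y^l\in\calJ^{\gamma_1,\gamma_2}_{x^l}$ with $V^l(y^l\mid x^l)\neq 0$, so that every pair $(x,y)$ with $N_{x,y}(x^l,y^l)>0$ has $V(y\mid x)>0$. If $W(y\mid x)=0$ for some such pair then $D(V\|W\mid P_{x^l})=+\infty$ and the inequality is trivial, so I may assume $W(y\mid x)>0$ whenever $V(y\mid x)>0$ and $P_{x^l}(x)>0$. Grouping the per‑coordinate factors of the likelihood ratio by the pair $(x_i,y_i)$ and taking logarithms gives
\[
\log\frac{W^l(y^l\mid x^l)}{V^l(y^l\mid x^l)}=\sum_{x,y}N_{x,y}(x^l,y^l)\log\frac{W(y\mid x)}{V(y\mid x)},
\]
the sum ranging over pairs with $N_x(x^l)>0$ and $V(y\mid x)>0$. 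Since over the same index set $-l\,D(V\|W\mid P_{x^l})=-\sum_{x,y}N_x(x^l)V(y\mid x)\log\frac{V(y\mid x)}{W(y\mid x)}$, the gap between these two quantities is $\sum_{x,y}\big(N_x(x^l)V(y\mid x)-N_{x,y}(x^l,y^l)\big)\log\frac{V(y\mid x)}{W(y\mid x)}$, and it suffices to bound the absolute value of this gap by $l(\gamma_1+2\gamma_2)\gamma'$. Splitting the pairs by whether $N_x(x^l)/l\geq\gamma_2$, the typicality of $y^l$ yields $|N_{x,y}-N_xV(y\mid x)|<\gamma_1 N_x\leq\gamma_1 l$ for the frequent symbols, while for the rare symbols $|N_xV(y\mid x)-N_{x,y}|\leq N_xV(y\mid x)+N_{x,y}\leq 2N_x<2\gamma_2 l$; in either case the factor is at most $(\gamma_1+2\gamma_2)l$, so pulling it out and using $\sum_{(x,y):V(y\mid x)\neq 0}\big|\log\frac{V(y\mid x)}{W(y\mid x)}\big|=\gamma'$ gives the required bound and hence \eqref{eqn:tilting}.

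I expect no deep obstacle here; the care required is entirely in the bookkeeping. The two fiddly points are (i) pinning down the conditional‑independence structure used in the Hoeffding step — the coordinates over which one averages are exactly those selected by $x^l$, and conditionally on the input the outputs there are i.i.d. — and (ii) tracking the support conditions so that every logarithm, and the conditional divergence, is well defined, with the degenerate case $W(y\mid x)=0\neq V(y\mid x)$ absorbed into the trivial bound $D(V\|W\mid P_{x^l})=+\infty$. The constant $(\gamma_1+2\gamma_2)$ in \eqref{eqn:tilting} is simply the crude triangle‑inequality estimate for the rare symbols noted above; a sharper constant $(\gamma_1+\gamma_2)$ is in fact available but is not needed.
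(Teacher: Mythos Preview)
Your proposal is correct and follows essentially the same approach as the paper: Hoeffding's inequality plus a union bound for \eqref{eqn:typical}, and for \eqref{eqn:tilting} the same decomposition of the log-likelihood ratio by joint type counts, split according to whether $N_x(x^l)/l\geq\gamma_2$. Your bookkeeping for the second part is in fact slightly cleaner than the paper's (you bound the single gap $|N_xV(y\mid x)-N_{x,y}|$ directly, whereas the paper manipulates the two sums separately and adds a non-negative correction term), and your observation that $(\gamma_1+\gamma_2)$ suffices is correct since $|N_xV(y\mid x)-N_{x,y}|\leq N_x<\gamma_2 l$ for rare symbols.
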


The following lemma states that $\gamma'$ vanishes to zero sufficiently fast if we apply Lemma~\ref{lemma:typical} with the substitution of $V\Leftarrow V_n^*$. The proof of this lemma is relegated to the end of this appendix. 
\begin{lemma} \label{lem:cont}
Let  $\gamma_n':=\sum_{(x,y):V_n^*(y|x)\ne 0 } \left| \log\frac{V_n^*(y|x)}{W(y|x)} \right|$. Then, $\gamma_n'=O(n^{-t})$.
\end{lemma}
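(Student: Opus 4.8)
The plan is to exploit the fact that the rate constraint defining $V_n^*$ in \eqref{eqn:min_V}, namely $R_n-2\delta_n = C - n^{-t} - o(n^{-t})$, approaches capacity as $n\to\infty$, which forces $V_n^*$ to be close to $W$; the work is in quantifying the rate of this convergence well enough to get $O(n^{-t})$.

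First I would use the definition \eqref{eqn:min_V} of $V_n^*$ together with the identity $\max_{P\in\mathcal{P}(\mathcal{X})} D(V\|W|P) = \max_{x\in\mathcal{X}} D(V(\cdot|x)\|W(\cdot|x))$ (already recorded in the definition of $E^+$) to write
\[
\max_{x\in\mathcal{X}} D\big(V_n^*(\cdot|x)\,\big\|\,W(\cdot|x)\big) \;=\; E^+(R_n-2\delta_n).
\]
Since $\delta_n=o(n^{-t})$, set $\rho_n := n^{-t}+2\delta_n = \Theta(n^{-t})$, so that $\rho_n\to 0$ and $\rho_n\sqrt{n}=\Theta(n^{1/2-t})\to\infty$ because $t<\frac{1}{3}<\frac{1}{2}$. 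As $W$ is output symmetric, $E^+(\cdot)=E_{\mathrm{SP}}(\cdot)$, so Lemma~\ref{lemma:sp} gives $E^+(R_n-2\delta_n)=E_{\mathrm{SP}}(C-\rho_n)=O(\rho_n^2)=O(n^{-2t})$. In particular this quantity is finite for all large $n$, which already forces $V_n^*(y|x)=0$ whenever $W(y|x)=0$ (otherwise the divergence would be infinite); hence every pair $(x,y)$ contributing to $\gamma_n'$ satisfies $W(y|x)>0$.

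Next I would convert this divergence bound into an entrywise closeness bound. By Pinsker's inequality, $\|V_n^*(\cdot|x)-W(\cdot|x)\|_1 = O\big(\sqrt{D(V_n^*(\cdot|x)\|W(\cdot|x))}\big)=O(n^{-t})$, uniformly in the finitely many $x$, so $|V_n^*(y|x)-W(y|x)|=O(n^{-t})$ for every pair $(x,y)$. Writing $w_{\min}:=\min\{W(y|x):W(y|x)>0\}>0$, for every pair with $W(y|x)>0$ and $n$ large we get $V_n^*(y|x)\ge w_{\min}-O(n^{-t})\ge w_{\min}/2>0$, hence the ratio $V_n^*(y|x)/W(y|x)$ lies in a fixed interval bounded away from $0$ and $\infty$ and equals $1+O(n^{-t})$; since $\log$ is Lipschitz on that interval,
\[
\Big|\log\tfrac{V_n^*(y|x)}{W(y|x)}\Big| \;=\; O(n^{-t}).
\]
Summing over the finitely many pairs $(x,y)$ with $V_n^*(y|x)\neq 0$ then yields $\gamma_n'\le |\mathcal{X}||\mathcal{Y}|\cdot O(n^{-t})=O(n^{-t})$.

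The main obstacle is the quadratic-in-$\rho_n$ decay $E^+(R_n-2\delta_n)=O(n^{-2t})$: this is exactly where output symmetry is used (so that $E^+=E_{\mathrm{SP}}$ and Lemma~\ref{lemma:sp} applies), and it is essential, since the weaker estimate $\gamma_n'=O\big(\sqrt{E^+(R_n-2\delta_n)}\big)$ combined only with continuity of $E^+$ at $C$ would give $o(1)$ rather than the required $O(n^{-t})$. A secondary technical point is the control of $\log(V_n^*(y|x)/W(y|x))$ near the boundary: one must first rule out $V_n^*(y|x)>0=W(y|x)$ via finiteness of the divergence, and then verify that $V_n^*(y|x)$ stays bounded away from $0$ on the support of $W(\cdot|x)$, which is what makes $\log$ Lipschitz there.
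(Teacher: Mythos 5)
Your proposal is correct and follows essentially the same route as the paper's proof: output symmetry to equate $E^+$ with $E_{\mathrm{SP}}$ and invoke Lemma~\ref{lemma:sp} for the $O(n^{-2t})$ divergence bound, Pinsker's inequality for the $O(n^{-t})$ entrywise deviations, and the observation that finiteness of the divergence rules out $V_n^*(y|x)>0=W(y|x)$. The only cosmetic difference is that you bound $\bigl|\log\frac{V_n^*(y|x)}{W(y|x)}\bigr|$ via Lipschitz continuity of $\log$ on an interval bounded away from zero, whereas the paper expands via Taylor's theorem as $\frac{\Delta_n(y|x)}{W(y|x)}+O\bigl(\frac{\Delta_n^2(y|x)}{W^2(y|x)}\bigr)$; the two are interchangeable.
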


Now we are ready to prove Lemma~\ref{lemma:ch_m}. Fix $x^{Tn}\in \mathcal{X}^{Tn}$ and $A\subseteq \mathcal{Y}^{Tn}$. 
We apply Lemma~\ref{lemma:typical} with the substitution of $V\Leftarrow V_n^*$ and $l\Leftarrow Tn$. To make the typical set satisfy the usual property that its probability approaches unity as the length of the sequences tends to infinity, we choose  
\begin{equation}
\gamma_{1,n} = n^{-(t+\zeta)}, ~\gamma_{2,n}=n^{-(t+\zeta)}, \label{eqn:def_eps1}
\end{equation}  
where $\zeta=\frac{1}{2}(\frac{1}{3}-t)>0$. Then we have 
  \begin{equation}
(V_n^*)^{Tn} (\calJ_{x^{Tn}}^{\gamma_{1,n},\gamma_{2,n}}  | x^{Tn} ) \ge 1-2|\mathcal{X}||\mathcal{Y}| \exp\{ - 2Tn^{\left(1-3(t+\zeta)\right)} \}=:1-\varphi_n\rightarrow 1,
  \end{equation}
because $1-3(t+\zeta)>0$. We note that because $\delta_n=\exp\{-o(n^{(1-3(t+\zeta))/2})\}$, it follows that 
\begin{align}
\delta_n'= \Theta\left(\frac{\delta_n}{-\log \delta_n}\right)=\frac{\exp\{-o(n^{(1-3(t+\zeta))/2})\}}{o(n^{(1-3(t+\zeta))/2})}.
\end{align}
Since $\varphi_n = o(\delta_n')$, we can find $n$ large enough so that $\varphi_n< \delta_n'/4$. Thus, by the union bound,
\begin{equation}
(V_n^*)^{Tn} (A\cap \calJ_{x^{Tn}}^{\gamma_{1,n},\gamma_{2,n}}  | x^{Tn} ) \ge \frac{\delta_n'}{4}\label{eqn:pb_joint}
\end{equation}
for sufficiently large $n$. 
Now, we obtain 
\begin{align}
W^{Tn}(A   |x^{Tn}) & \ge W^{Tn} (A\cap \calJ_{x^{Tn}}^{\gamma_{1,n},\gamma_{2,n}}  | x^{Tn} )  \\
& \overset{(a)}\ge  \frac{\delta_n'}{4} \exp \left\{ -Tn  \left( D( V_n^* \|W|P_{x^{Tn}}) + (\gamma_{1,n}+2\gamma_{2,n} ) \gamma_{n}' \right)\right\}\\
&\overset{(b)}= \frac{\delta_n'}{4} \exp \left\{ -Tn  \left( D( V_n^* \|W|P_{x^{Tn}}) + O(n^{-2t-\zeta})\right)\right\}
\end{align}
where $(a)$ is from Lemma~\ref{lemma:typical} and $(b)$ is due to the choice of $\gamma_{1,n}$ and $\gamma_{2,n}$ in \eqref{eqn:def_eps1} together with the asymptotic bound on $\gamma_n'$ in  Lemma~\ref{lem:cont}. Since $\zeta>0$, Lemma~\ref{lemma:ch_m} is proved. 
\end{proof}

\begin{proof}[Proof of Lemma~\ref{lemma:typical}]
Fix a DMC $(\mathcal{X}, \mathcal{Y}, \{V(y|x): x\in \mathcal{X}, y\in \mathcal{Y}\})$, a vector $x^l\in \mathcal{X}^l$ of length $l$, $\gamma_{1}>0$ and $\gamma_{2}>0$. 
Let $(\calJ^{\gamma_{1},\gamma_{2}}_{x^l})^c:=\mathcal{Y}^l\setminus \calJ^{\gamma_{1},\gamma_{2}}_{x^l}$. First, \eqref{eqn:typical} is proved as follows:
\begin{align}
V^l((\calJ^{\gamma_{1},\gamma_{2}}_{x^l})^c|x^l)&\leq \sum_{ x\in \mathcal{X}: \frac{N_x(x^l)}{l}\geq \gamma_2}\sum_{y\in \mathcal{Y}}\Pr\left( \Big|\frac{N_{x,y}(x^l,y^l)}{N_x(x^l)}-V(y|x)\Big|\geq\gamma_1 \right)\\
&\overset{(a)}\leq  \sum_{ x\in \mathcal{X}: \frac{N_x(x^l)}{l}\geq \gamma_2}\sum_{y\in \mathcal{Y}}2\exp\left\{-2\gamma_1^2 N_x(x^l)\right\}\\
&\leq 2|\mathcal{X}||\mathcal{Y}|\exp\{-2\gamma_{1}^2\gamma_{2}l\},
\end{align}
where $(a)$ is from the Chernoff bound. 

Next, fix any  $y^l \in \calJ_{x^l}^{\gamma_1,\gamma_2}$ such that $V^l(y^l|x^l)\neq 0$. Then,
\begin{align}
V^l(y^l | x^l)=\prod_{i=1}^l V(y_i|x_i)=\prod_{x\in\mathcal{X}, y\in \mathcal{Y}}V(y|x)^{N_{x,y}(x^l, y^l)}
\end{align}
and similarly 
\begin{align}
W^l(y^l | x^l)=\prod_{x\in\mathcal{X}, y\in \mathcal{Y}}W(y|x)^{N_{x,y}(x^l, y^l)}. 
\end{align}
Then, the ratio of the two probabilities is given as follows: 
\begin{align}
\frac{W^l(y^l | x^l)}{V^l(y^l | x^l)}&=\prod_{x\in\mathcal{X}, y\in \mathcal{Y}}\left(\frac{W(y|x)}{V(y|x)}\right)^{N_{x,y}(x^l, y^l)}\\
&=\exp\left\{-l\sum_{x\in\mathcal{X}, y\in \mathcal{Y}}\frac{N_{x,y}(x^l, y^l)}{l}\log\frac{V(y|x)}{W(y|x)} \right\}. \label{eqn:ratio}
\end{align}
To bound the summation term inside the exponential function in \eqref{eqn:ratio}, we consider the following two subsets of $\mathcal{X}$: 
\begin{align}
\mathcal{X}_1(x^l):=\left\{x: \frac{N_x(x^l)}{l}<\gamma_2\right\},~\mathcal{X}_2(x^l):=\mathcal{X}\setminus\mathcal{X}_1(x^l).
\end{align}
Then, we have
\begin{align}
&\sum_{x\in\mathcal{X}_1(x^l), y\in \mathcal{Y}}\frac{N_{x,y}(x^l, y^l)}{l}\log\frac{V(y|x)}{W(y|x)} \cr
&\overset{(a)}=\sum_{x\in\mathcal{X}_1(x^l), y\in \mathcal{Y}: V(y|x)\neq 0}\frac{N_{x,y}(x^l, y^l)}{l}\log\frac{V(y|x)}{W(y|x)} \\
&\leq \sum_{x\in\mathcal{X}_1(x^l), y\in \mathcal{Y}: V(y|x)\neq 0, N_{x}(x^l)\neq 0}\frac{N_{x}(x^l)}{l}\log\frac{V(y|x)}{W(y|x)} \\
&\overset{(b)}\leq \gamma_2  \sum_{x\in\mathcal{X}_1(x^l), y\in \mathcal{Y}: V(y|x)\neq 0, N_{x}(x^l)\neq 0}\left|\log\frac{V(y|x)}{W(y|x)}\right|\\
&\overset{(c)}\leq \gamma_2  \sum_{x\in\mathcal{X}_1(x^l), y\in \mathcal{Y}: V(y|x)\neq 0, N_{x}(x^l)\neq 0}\left|\log\frac{V(y|x)}{W(y|x)}\right|\cr
&\qquad\qquad+\sum_{x\in\mathcal{X}_1(x^l), y\in \mathcal{Y}: V(y|x)\neq 0, N_{x}(x^l)\neq 0}\frac{N_x(x^l)}{l}\left(V(y|x)\log\frac{V(y|x)}{W(y|x)} +\left|\log\frac{V(y|x)}{W(y|x)}\right|\right)\\
&\leq \sum_{x\in\mathcal{X}_1(x^l), y\in \mathcal{Y}}\frac{N_x(x^l)}{l}V(y|x)\log\frac{V(y|x)}{W(y|x)}+2\gamma_2  \sum_{x\in\mathcal{X}_1(x^l), y\in \mathcal{Y}: V(y|x)\neq 0, N_{x}(x^l)\neq 0}\left|\log\frac{V(y|x)}{W(y|x)}\right|, \label{eqn:x1}
\end{align}
where $(a)$ is because $V^l(y^l|x^l)\neq 0$ implies that $N_{x,y}(x^l,y^l)=0$ for $x$ and $y$ such that $V(y|x)=0$, $(b)$ is due to the definition of $\mathcal{X}_1(x^l)$,  and $(c)$ is because  the term in the brackets in the second summation is always non-negative. Next, we have
\begin{align}
&\sum_{x\in\mathcal{X}_2(x^l), y\in \mathcal{Y}}\frac{N_{x,y}(x^l, y^l)}{l}\log\frac{V(y|x)}{W(y|x)} \cr
&=\sum_{x\in\mathcal{X}_2(x^l), y\in \mathcal{Y}: V(y|x)\neq 0,N_{x}(x^l)\neq 0}\frac{N_x(x^l)}{l}\frac{N_{x,y}(x^l, y^l)}{N_x(x^l)}\log\frac{V(y|x)}{W(y|x)}\\
&\overset{(a)}\leq \sum_{x\in\mathcal{X}_2(x^l), y\in \mathcal{Y}}\frac{N_x(x^l)}{l}V(y|x)\log\frac{V(y|x)}{W(y|x)}\cr
&\qquad\qquad\qquad\qquad+\gamma_1\sum_{x\in\mathcal{X}_2(x^l), y\in \mathcal{Y}: V(y|x)\neq 0,N_{x}(x^l)\neq 0}\frac{N_x(x^l)}{l}\left|\log\frac{V(y|x)}{W(y|x)}\right|\\
&\leq \sum_{x\in\mathcal{X}_2(x^l), y\in \mathcal{Y}}\frac{N_x(x^l)}{l}V(y|x)\log\frac{V(y|x)}{W(y|x)}+\gamma_1\sum_{x\in\mathcal{X}_2(x^l),y\in \mathcal{Y}: V(y|x)\neq 0,N_{x}(x^l)\neq 0}\left|\log\frac{V(y|x)}{W(y|x)}\right|, \label{eqn:x2}
\end{align}
where $(a)$ is from the definitions of $\mathcal{X}_2(x^l)$ and $\calJ^{\gamma_{1},\gamma_{2}}_{x^l}$. 

By combining \eqref{eqn:x1} and \eqref{eqn:x2}, we obtain 
\begin{align}
&\sum_{x\in\mathcal{X},y\in \mathcal{Y}}\frac{N_{x,y}(x^l, y^l)}{l}\log\frac{V(y|x)}{W(y|x)}\cr
&\leq \sum_{x\in\mathcal{X},y\in \mathcal{Y}}\frac{N_x(x^l)}{l}V(y|x)\log\frac{V(y|x)}{W(y|x)}+(\gamma_1+2\gamma_2)\sum_{x\in\mathcal{X},y\in \mathcal{Y}: V(y|x)\neq 0,N_{x}(x^l)\neq 0}\left|\log\frac{V(y|x)}{W(y|x)}\right|\\
&\leq D(V\|W|P_{x^l})+(\gamma_1+2\gamma_2)\gamma'\label{eqn:ba}.
\end{align}
By substituting \eqref{eqn:ba} to \eqref{eqn:ratio}, the proof is completed. \end{proof}

\begin{proof}[Proof of Lemma~\ref{lem:cont}]
Because $W$ is assumed to be an output symmetric DMC, we have
\begin{align}
\max_{x\in \mathcal{X}} D(V_n^*(\cdot|x)\|W(\cdot|x))&=\max_{P\in\calP (\calX)} D(V_n^*\|W|P)\\
&=E^+(C-n^{-t}-2\delta_n)\\
&=E_{\mathrm{SP}}(C-n^{-t}-2\delta_n)\\
&\overset{(a)}= O(n^{-2t}), \label{eqn:max_bd}
\end{align}
where $(a)$ is from Lemma~\ref{lemma:sp}. Hence, for every $x\in \mathcal{X}$, 
\begin{align}
D(V_n^*(\cdot|x)\|W(\cdot|x))=\sum_y V_n^*(y|x)\log \frac{V_n^*(y|x)}{W(y|x)}=O(n^{-2t}). \label{eqn:dist}
\end{align}
Note that \eqref{eqn:dist} implies that if $W(y|x)=0$ for some $x\in \mathcal{X}$ and $y\in \mathcal{Y}$, then $V_n^*(y|x)=0$.\footnote{Note that if $W(y|x)= 0$ but $V_n^*(y|x)\neq 0$, $ V_n^*(y|x)\log \frac{V_n^*(y|x)}{W(y|x)}=\infty$ and thus \eqref{eqn:dist} does not hold. }

Now, let $\Delta_n(y|x):=|V_n^*(y|x)-W(y|x)|$ for all $x\in \mathcal{X}$ and $y\in \mathcal{Y}$. Then, for all $x\in \mathcal{X}$ and $y\in \mathcal{Y}$, $\Delta_n(y|x)=O(n^{-t})$. This can be proved using the Pinsker's inequality \cite[p.44]{CsiszarKorner:11}, i.e., for each $x\in \mathcal{X}$,
\begin{align}
\sum_{y\in \mathcal{Y}}\Delta_n(y|x)\leq \sqrt{2\ln 2 \cdot D(V_n^*(\cdot|x)\|W(\cdot|x))}.
\end{align}  
Then, we obtain 
\begin{align}
\sum_{(x,y):V_n^*(y|x)\ne 0 } \left| \log\frac{V_n^*(y|x)}{W(y|x)} \right|&\overset{(a)}=
\sum_{(x,y):V_n^*(y|x)\ne 0, W(y|x)\ne 0 } \left| \log\frac{V_n^*(y|x)}{W(y|x)} \right|\\
&\overset{(b)}\leq \sum_{(x,y):W(y|x)\ne 0 } \left( \frac{\Delta_n(y|x)}{W(y|x)} +O\Big(\frac{\Delta_n^2(y|x)}{W^2(y|x)}\Big)\right)\\
&=O(n^{-t}),
\end{align}
where $(a)$ is because for all $(x,y)$ such that $V_n^*(y|x) > 0$, it is true that $W(y|x) > 0$ due to  \eqref{eqn:dist} and $(b)$ is from Taylor's theorem, which completes the proof. 
\end{proof}


\end{document}